\newcommand{\SEL}{\texttt{SEL}}
\newcommand{\CTRL}{\texttt{$\Pi$}}
\newcommand{\vpar}{v^\textnormal{par}}
\newcommand{\Vpar}{\mathcal{V}^\textnormal{par}}
\newcommand{\vnom}{v^\textnormal{nom}}
\newcommand{\vmin}{\underline{v}}
\newcommand{\vmax}{\overline{v}}
\newcommand{\qmin}{\underline{q}}
\newcommand{\qmax}{\overline{q}}
\newcommand{\qlims}{[\qmin, \qmax]}
\newcommand{\vlims}{[\vmin, \vmax]}
\newcommand{\etamax}{\overline\eta}
\DeclareMathOperator*{\mean}{mean}  
\DeclareMathOperator{\vech}{vech}
\DeclareMathOperator{\NCBC}{NCBC}
\newcommand{\R}{\mathbb{R}}  
\newcommand{\X}{\mathcal{X}}  
\newcommand{\Z}{\mathbb{Z}}  
\newcommand{\Sym}{\mathbb{S}}  
\newcommand{\Ncal}{\mathcal{N}}  
\newcommand{\Ecal}{\mathcal{E}}  
\newcommand{\abs}[1]{\left\lvert#1\right\rvert}  
\newcommand{\iid}{\stackrel{\mathrm{iid}}{\sim}}  
\newcommand{\norm}[1]{\left\lVert#1\right\rVert}  
\newcommand{\one}{\mathbf{1}}  
\newcommand{\set}[1]{\left\{#1\right\}}  
\newcommand{\zero}{\mathbf{0}}  
\DeclareMathOperator*{\argmin}{arg\,min}
\DeclareMathOperator{\diam}{diam}  
\DeclareMathOperator{\var}{Var}  
\newtheorem{theorem}{Theorem}
\newtheorem{lemma}{Lemma}
\newtheorem{definition}{Definition}
\newtheorem{assumption}{Assumption}
\crefname{lemma}{Lemma}{Lemmas}
\crefname{definition}{Definition}{Definitions}
\crefname{assumption}{Assumption}{Assumptions}
\begin{document}

\bstctlcite{reference:BSTcontrol}

\title{Online Learning for Robust Voltage Control\\Under Uncertain Grid Topology}

\author{Christopher Yeh,~\IEEEmembership{Graduate Student Member,~IEEE}, Jing Yu,~\IEEEmembership{Graduate Student Member,~IEEE},\\Yuanyuan Shi,~\IEEEmembership{Member,~IEEE}, Adam Wierman,~\IEEEmembership{Senior Member,~IEEE}
\thanks{Manuscript received 17 August 2023; revised 11 January 2024; accepted 15 March 2024. Date of publication 1 April 2024; date of current version 23 August 2024. This work was supported in part by the Caltech Resnick Sustainability Institute; in part by two Caltech/Amazon AWS AI4Science Fellowships; and in part by the National Science Foundation under Grant CNS-2146814, Grant CNS-2106403, Grant CPS-2136197, Grant ECCS-2200692, and Grant NGSDI-2105648. \textit{(Corresponding author: Christopher Yeh.)}}
\thanks{Christopher Yeh, Jing Yu, and Adam Wierman are with the Department of Computing and Mathematical Sciences, California Institute of Technology, Pasadena, CA 91125 USA (email: \url{cyeh@caltech.edu})}
\thanks{Yuanyuan Shi is with the Department of Electrical and Computer Engineering, University of California, San Diego, La Jolla, CA 92093 USA.}
\thanks{Color versions of one or more figures in this article are available at
\url{https://doi.org/10.1109/TSG.2024.3383804}.}
\thanks{Digital Object Identifier 10.1109/TSG.2024.3383804}
}

\markboth{IEEE Transactions on Smart Grid}{Yeh \MakeLowercase{et al.}: Online learning for robust voltage control under uncertain grid topology}

\IEEEpubid{\begin{minipage}{\textwidth}\ \\[12pt] \centering
\copyright~2024 The Authors. This work is licensed under a Creative Commons Attribution 4.0 License.\\
For more information, see \url{https://creativecommons.org/licenses/by/4.0/}
\end{minipage}}


\maketitle

\begin{abstract}
Voltage control generally requires accurate information about the grid's topology in order to guarantee network stability. However, accurate topology identification is challenging for existing methods, especially as the grid is subject to increasingly frequent reconfiguration due to the adoption of renewable energy. In this work, we combine a nested convex body chasing algorithm with a robust predictive controller to achieve provably finite-time convergence to safe voltage limits in the online setting where there is uncertainty in both the network topology as well as load and generation variations. In an online fashion, our algorithm narrows down the set of possible grid models that are consistent with  observations and adjusts reactive power generation accordingly to keep voltages within desired safety limits. Our approach can also incorporate existing partial knowledge of the network to improve voltage control performance. We demonstrate the effectiveness of our approach in a case study on a Southern California Edison 56-bus distribution system. Our experiments show that in practical settings, the controller is indeed able to narrow the set of consistent topologies quickly enough to make control decisions that ensure stability in both linearized and realistic non-linear models of the distribution grid.
\end{abstract}

\begin{IEEEkeywords}
voltage control, distribution grid, convex body chasing, robust control.
\end{IEEEkeywords}

\section{Introduction}

\IEEEPARstart{O}{perators} of electricity distribution grids must maintain voltages at each bus within certain operating limits, as deviations from such limits may damage electrical equipment and cause power outages \cite{regulator,haes2019survey}. This ``voltage control'' or ``voltage regulation'' problem has been well-studied, \textit{e.g.}, ~\cite{rebours2007survey,zhu2015fast,molzahn2017survey} and the references therein. Voltage control devices and algorithms aim to guarantee grid stability and minimize the costs associated with control inputs. While classic voltage regulation devices such as tap-changing transformers are effective in dealing with \emph{slow} voltage variations ~\cite{senjyu2008optimal,gao2010review}, increasing penetration of renewables leads to faster variations, and a growing body of literature has focused on inverter-based controllers that can respond quickly by adjusting their active and reactive power set-points. Most of these works cast voltage control as an optimization problem and then propose different centralized or decentralized algorithms depending on the communication infrastructure.

Typically, voltage control algorithms assume \textit{exact knowledge} of the underlying grid topology. This includes centralized controllers such as algorithms based on model predictive control (MPC) which optimize control decisions for a short-term horizon. \cite{guo2019mpc} uses MPC to manage distributed generation and energy storage systems, whereas \cite{maharjan2021robust} proposes a robust MPC controller that is robust to uncertainty in the forecasts of future loads and solar generation.

However, the exact grid topology and line parameters are often not known, and using existing voltage control algorithms with incorrect grid information may lead to problems with grid stability~\cite{park2018exact,li2019robust}. For example, parts of the grid may undergo reconfiguration due to load balancing or unplanned maintenance, as frequently as every hour of the day \cite{dehghanian2016probabilistic,li2016determination,deka2017structure,esmaeili2019optimal}. This problem is exacerbated by the increasing integration of distributed energy resources (DERs), such as photovoltaic (PV) and storage devices. Especially in distribution grids, where DERs are not owned or operated by the electricity utility, the grid operator may lack up-to-date information about the grid topology~\cite{liao2015distribution}. While a grid operator can install sensors to help identify the current network topology, unless such sensors are densely deployed (at great cost), uncertainty about the topology remains. Thus, distribution grid operators cannot expect to operate with perfect topology information and the design of voltage control algorithms robust to unknown grid topology is crucial.

\IEEEpubidadjcol
There are several families of existing algorithms that do not require knowing the network topology: decentralized controllers, model-free controllers, and controllers that first try to infer the network topology. While decentralized voltage control algorithms are generally efficient to implement, such controllers lack voltage stability guarantees when the load is time-varying \cite{bolognani2013distributed,li2014realtime,zhu2015fast,tang2019fast,qu2020optimal}. Likewise, model-free controllers based on deep reinforcement learning do not require knowing the network topology, but they generally have no performance or voltage stability guarantees and are therefore not suitable for safety-critical infrastructure \cite{duan2019deep,wang2020data,xu2020optimal,gao2021consensus,sun2021two}. Some recent works~\cite{shi2022stability,feng2022stability,cui2022decentralized} have proposed methods for introducing stability guarantees for model-free deep reinforcement learning approaches. Their main tool is Lyapunov stability theory, from which a structural constraint for stable controllers is derived, and policy optimization with the constraint is performed. However, their stability guarantees are only valid over an infinite time horizon, and achieving good performance with deep reinforcement learning generally requires large amounts of historical training data. In contrast, our proposed framework jointly learns the system model (consistent with data) and stable controller in an online fashion, achieving a finite-mistake guarantee and good performance without relying on historical data.

Another standard approach for handling uncertainty about network topology is to first estimate the topology and line parameters using a form of system identification with data and then apply a standard voltage control algorithm using the identified network topology. There is a growing literature of such data-driven methods, \textit{e.g.}, \cite{kekatos2014grid,liao2015distribution, park2018exact,li2019robust, fabbiani2021identification, brouillon2022bayesian, cavraro2018graph, ardakanian2019identification, xu2019data, nowak2020measurement, deka2023learning, chen2020data}. A common approach is to leverage least squares for system model estimation. The estimation and therefore control guarantees depend on statistical modeling of measurement noise (\textit{e.g.}, Gaussian). In contrast, we leverage online learning in order to be robust against any bounded disturbances, such as modeling errors and adversarial noise. While least squares-based algorithms focus on \textit{asymptotic} estimation convergence, \textit{e.g.} \cite{brouillon2022robust, yu2017patopa}, we present a \textit{finite} mistake guarantee that is crucial for safe \textit{transient} system behavior.

Another prominent approach is to use graphical models for topology reconstruction~\cite{deka2016estimating}, via maximum likelihood methods while enforcing other structural restrictions like low-rank and sparsity. However, these methods that first perform some form of system identification have drawbacks. First, the estimated topology and/or system dynamics may be imperfect~\cite{sharon2012topology}, and applying standard voltage control algorithms using these imperfect estimates may still lead to system instability. Second, these methods either assume access to historical data or require acquiring data online over hundreds of time steps, during which the stability of the system is ignored~\cite{liao2015distribution,deka2016estimating}. In contrast, our proposed approach does not perform system identification separately from control; the joint operation of our robust controller with the system dynamics estimation gives rise to our stability guarantee.

\subsection{Contributions}
We propose a new approach for voltage control over an uncertain grid topology that does not perform system identification and voltage control separately. Instead, our approach robustly learns to stabilize voltage within the desired limits directly, \emph{without prior knowledge of the topology and without needing to precisely learn the topology}. Our approach takes ideas from online nested convex body chasing (NCBC) \cite{argue2019nearly} and robust predictive control and combines them using a new learning framework~\cite{ho2021online} to design a voltage control algorithm. Intuitively, we use a NCBC algorithm to track the set of topologies that are consistent with the observed voltage measurements---as more measurements are taken, the set of consistent topologies shrinks (and so the sets are nested). As these measurements are taken, a form of robust predictive control is used for voltage control, where the robustness guarantee is used to ensure that the uncertainty about the topology can be handled. Our main result (\Cref{thm:main}) provides a finite error stability bound for the overall controller, which is summarized in \Cref{alg:robust_online_volt_control}. This represents the first voltage control algorithm that is provably robust to large uncertainty about network topology.

This paper supersedes the results of the preliminary version of this work~\cite{yeh2022robust} in the following aspects:
\begin{enumerate}
    \item We improve the analysis of \cite{yeh2022robust}, which assumes no uncertainty in the maximum load/generation variability, to both reduce the mistake bound by a factor of 2 and also improve empirical voltage control performance.
    \item We extend our approach to handle uncertainty in the maximum variability of load and generation entities in the grid, and we show that in the limiting case of 0 uncertainty, our result coincides with the improved analysis mentioned in (1).
    \item We perform case studies of the proposed algorithm on the Southern California Edison (SCE) 56-bus distribution system~\cite{farivar2012optimal} with a more realistic nonlinear power flow model with partial control and partial observation. Even though the design of our method is based on a linear approximation to the power flow model, our method still performs well for the nonlinear system.
    \item We demonstrate how to incorporate existing partial knowledge of the grid topology and network line parameters into the algorithm. We show that incorporating such prior knowledge can improve the performance of our algorithm.
\end{enumerate}
\section{Model}
\label{sec:volt_ctrl_background}
We study voltage control on an unknown grid topology. We consider a radial (tree-structured) power distribution network represented as a connected directed graph $G = (\Ncal, \Ecal)$, where $\Ncal = \{0, 1, 2, \dotsc, n\}$ is the set of buses (nodes) and $\Ecal \subset \Ncal \times \Ncal$ is the set of lines (directed edges). Let the network be rooted at bus 0 (the substation or slack bus), and let other buses be branch buses. Let $\mathcal{C} \subseteq \Ncal$ denote the subset of buses with controllable reactive power injection. Because the network is radial and rooted at bus 0, there is a unique path $\mathcal{P}_i$ from bus 0 to any other bus $i$. For branch buses, let $v \in \R^n$ be their squared voltage magnitudes and $p + \mathbf{i} q$ be their complex power injection, where $p \in \R^n$ (units \si{\watt}) is the net active power injection, and $q \in \R^n$ (units \si{\var}) is the net reactive power injection. The DistFlow branch equations~\cite{baran1989optimal} for a distribution grid are as follows, for all $j \in \Ncal$ and $(i,j) \in \Ecal$:
\begin{subequations}
\label{eq:nonlinear-distflow}
\begin{align}
    -p_j &= P_{ij} - r_{ij} {l_{ij}} - \sum_{k: (j, k) \in \Ecal} P_{jk} \label{eq:bfm_p} \\
    -q_j &= Q_{ij} - x_{ij} {l_{ij}} - \sum_{k: (j, k) \in \Ecal} Q_{jk} \label{eq:bfm_q}\\
    v_j &= v_i - 2(r_{ij}P_{ij} + x_{ij} Q_{ij}) + (r_{ij}^2 + x_{ij}^2) l_{ij} \label{eq:bfm_v} \\
    l_{ij} &= \frac{P_{ij}^2 + Q_{ij}^2}{v_i} \label{eq:bfm_l}
\end{align}
\end{subequations}
where $P_{ij}$ and $Q_{ij}$ represent the active power and reactive power flow on line $(i,j)$, and $r_{ij}, x_{ij} > 0$ are the real-valued line resistance and reactance (units \si{\ohm}). \Cref{eq:bfm_p,eq:bfm_q} represent the real and reactive power conservation at bus $j$, and \eqref{eq:bfm_v} represents the voltage drop from bus $i$ to bus $j$.

Assuming the branch power losses ($r_{ij} l_{ij}$, $x_{ij} l_{ij}$) are negligible yields the simplified DistFlow equations \cite{baran1989optimalsizing}, which can be rearranged into
\begin{equation}\label{eq:simplified-distflow}
    v = R^\star p + X^\star q + v^0 \one_n
\end{equation}
where $v^0 \in \R^n$ is the known, constant squared voltage magnitude at the substation, and $R^\star, X^\star \in \Sym^n$ are computed from the network topology and line parameters
\begin{equation}\label{eq:true_RX}
    R^\star_{ij} := 2 \sum_{\mathclap{(h,k) \in \mathcal{P}_i \cap \mathcal{P}_j}} r_{hk},
    \quad
    X^\star_{ij} := 2 \sum_{\mathclap{(h,k) \in \mathcal{P}_i \cap \mathcal{P}_j}} x_{hk},
    \quad
    i,j \in [n]
\end{equation}
with $[n] := \{1, \dotsc, n\}$ \cite{li2014realtime}. ($\Sym^n$ is the set of symmetric $n \times n$ matrices.) $R^\star, X^\star$ are positive definite with nonnegative entries \cite{farivar2013equilibrium}, and the largest entry of each row of these matrices is along the diagonal, since
\begin{equation}\label{eq:diag-largest}
    X^\star_{ij}
    = 2 \sum_{\mathclap{(h,k) \in \mathcal{P}_i \cap \mathcal{P}_j}} x_{hk}
    \leq 2 \sum_{\mathclap{(h,k) \in \mathcal{P}_i}} x_{hk}
    = X^\star_{ii}
\end{equation}
and likewise for $R^\star_{ij} \leq R^\star_{ii}$.

We assume that the active power injection $p$ is exogenous but that reactive power at each bus can be decomposed as $q = q^c + q^e$, where $q^c$ is the ``controllable'' component and $q^e$ is the ``exogenous'' (\textit{i.e.}, uncontrollable) component. Following \cite{li2014realtime}, we define $\vpar = R^\star p + X^\star q^e + v^0 \one_n \in \R^n$ (``par'' stands for ``partial'') representing the exogenous effects on voltage. Then, $v = X^\star q^c + \vpar$, which can be modeled as a discrete-time linear system
\begin{equation}\label{eq:vol_dyn1}
    v(t+1) = X^\star q^c(t) + \vpar(t).
\end{equation}
Substituting $u(t) = q^c(t) - q^c(t-1)$ (change in controllable reactive power injection) and $w(t) = \vpar(t) - \vpar(t-1)$ (change in exogenous noise) yields the linear dynamical system
\begin{equation}\label{eq:vol_dyn2}
    v(t+1) = v(t) + X^\star u(t) + w(t).
\end{equation}

The voltage control problem~\cite{farivar2012optimal} is to drive the squared voltage magnitudes of each bus from an initial state $v(1) \in \R^n$ into a given multi-dimensional interval $\vlims \subset \R^n$; it is possible that $v(1)$ does not start within the interval due to some large initial disturbance. For all $t \geq 2$, the voltage control algorithm aims to maintain $v(t)$ within $\vlims$, ideally as close as possible to a ``nominal'' value $\vnom \in \vlims$, typically $\vnom = (\vmin + \vmax)/2$. The cost for deviating from $\vnom$ is measured by $   \norm{v(t) - \vnom}_{P_v}^2$ for some positive semidefinite matrix $P_v$, where $\norm{x}_A^2 := x^\top A x$.

At each time step, buses may change their reactive power injection $q^c(t)$ in order to regulate the voltage close to $\vnom$. The reactive power injection (including $q^c(0)$) is limited within a given bound $\qlims \subset \R^n$. Buses not in $\mathcal{C}$ do not have any ability to control the reactive power injection: $\forall i \not\in \mathcal{C}.\ \qmin_i = \qmax_i = 0$. We do not place any hard ``ramp constraints'' on $u(t)$. However, we impose a quadratic ramping cost $\norm{u(t)}_{P_u}^2$ where $P_u$ is a positive semidefinite matrix.

In summary, the voltage control problem is to determine an online sequence of reactive power injections $q^c(1), q^c(2), \dotsc$ to drive voltages $v(t)$ to a desired interval $\vlims$ while minimizing voltage violation and control costs $\norm{v(t) - \vnom}_{P_v}^2 + \norm{u(t)}_{P_u}^2$. In this work, we solve the voltage control problem in the setting where $X^\star$ is unknown.

\section{Robust Online Voltage Control}

In this section we introduce our robust online voltage control algorithm (\Cref{alg:robust_online_volt_control}) and its performance bound (\Cref{thm:main}), which is the main result of this paper.

\begin{figure}[t]
    \centering
    \includegraphics[width=\columnwidth]{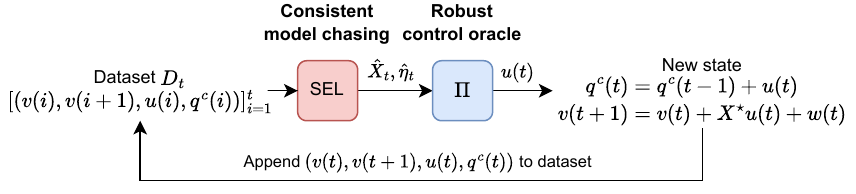}
    \caption{Online robust voltage control framework}
    \label{fig:orc_framework}
\end{figure}

\subsection{Algorithm}

As shown in \Cref{fig:orc_framework}, the algorithm has two main components: a consistent model chasing algorithm \SEL{} (\Cref{alg:robust_online_volt_control}, step 2)
and a robust control oracle \CTRL{} (\Cref{alg:robust_online_volt_control}, step 3). \SEL{} and \CTRL{} are combined by adapting ideas from \cite{ho2021online}.

The model chasing algorithm \SEL{} selects a consistent model for the robust control oracle \CTRL{} out of all plausible models that are consistent with the online observations and prior knowledge of the grid. The selection may use any competitive NCBC algorithm, which is the online problem of choosing a sequence of points within sequentially nested convex sets, with the aim of minimizing the sum of distances between the chosen points \cite{argue2019nearly}. In our experiments, we use a simple projection-based NCBC algorithm, detailed in \Cref{sec:experiments}.

The robust control oracle \CTRL{} is a novel robust predictive controller (\Cref{thm:oracle}). The robustness guarantee of \CTRL{} is necessary for the analysis which integrates \SEL{} with \CTRL{} to provide the finite mistake guarantee of the overall algorithm. We remark that other choices for either component are possible, as long as they provide the guarantees needed in the analysis in \Cref{sec:proof}.

Intuitively, \SEL{} and \CTRL{} are combined in a way such that \SEL{} always reduces the uncertainty about the unknown model whenever \CTRL{} outputs an action that causes a voltage limit violation. This means that \CTRL{} cannot take too many ``bad'' actions before the system uncertainty is small.

\subsection{Assumptions}

Before presenting the main results, we introduce three assumptions that underlie our analysis and discuss why they are both needed and practical.

\begin{assumption}
\label{as:bounded_w}
The change in noise is bounded as
\begin{equation*}
    \forall t:\quad \norm{w(t)}_\infty \leq \eta^\star,
\end{equation*}
where $w(t) = \vpar(t) - \vpar(t-1)$. $\eta^\star \in [0, \etamax]$ is a constant (possibly unknown), while $\etamax$ is a known upper-bound.
\end{assumption}

This first assumption is standard and bounds the noise in the dynamics. It represents realistic behavior in power systems where the active and exogenous reactive power injections do not vary dramatically between time steps, as can be seen by expanding $w(t)$:
\begin{align*}
    w(t)
    &= \vpar(t) - \vpar(t-1) \\
    &= R^\star(p(t) - p(t-1)) + X^\star(q^e(t) - q^e(t-1)).
\end{align*}
For example, if the net active and exogenous reactive power injection is the same at time steps $t$ and $t-1$, then $w(t) = 0$.

An unknown $\eta^\star$ indicates uncertainty in the maximum variability of the exogenous power injections. Unlike \cite{yeh2022robust} which assumes a fixed $\eta$, our inclusion of both an unknown $\eta^\star$ and a known upper-bound $\etamax$ allows more flexibility in our algorithmic design and the incorporation of prior knowledge.

\begin{assumption}
\label{as:bounded_X}
The true model $X^\star$ lies within a known compact, convex uncertainty set $\X \subset \Sym_+^n \cap \R_+^{n \times n}$. ($\Sym_+^n$ is the set of $n \times n$ positive semidefinite matrices, and $\R^{n \times n}_+$ is the set of $n \times n$ matrices with nonnegative entries.)
\end{assumption}

Our second assumption bounds the uncertainty about the network topology and line parameters. It ensures that the unknown true model parameters $X^\star$ belong to a compact, convex set $\X$, which is a minimal assumption necessary for proving an analytic guarantee. $P_1 = \X \times [0,\etamax]$ forms the initial ``consistent set'' (see \Cref{def:consistent}) for our consistent model chasing algorithm \SEL.

This assumption is realistic, as a grid operator should have at least some prior knowledge about the distribution grid topology and the range of possible line parameters, even if they do not have the exact values. In cases where the grid has multiple possible topologies due to switches, $\X$ could be set to the convex hull of the corresponding $X$ matrices.

\begin{definition}[$\norm{\cdot}_\triangle$ and $\norm{\cdot}_{\triangle,\delta}$]
For any matrix $X \in \Sym^n$ and scalars $\eta, \delta \geq 0$, define
\begin{align*}
    \norm{X}_\triangle
    &:= \norm{\vech(X)}_2
    = \sqrt{\sum_{i=1}^n \sum_{j=i}^n X_{ij}^2}
    \\
    \norm{(X,\eta)}_{\triangle,\delta}
    &:= \sqrt{\delta^2\eta^2 + \sum_{i=1}^n \sum_{j=i}^n X_{ij}^2}
    = \sqrt{\delta^2\eta^2 + \norm{X}_\triangle^2}.
\end{align*}
For any sets $\X \subseteq \Sym^n$ and $A \subseteq \R$, we define diameters $\diam(\X)$ and $\diam(\X \times A)$ with respect to the norms $\norm{\cdot}_\triangle$ and $\norm{\cdot}_{\triangle,\delta}$, respectively.
\end{definition}

These norms isometrically map our parameter space to Euclidean space, enabling us to take advantage of known results on NCBC within Euclidean space. For the norm $\norm{\cdot}_{\triangle,\delta}$, the hyperparameter $\delta$ trades off the weight between $X$ and $\eta$ in the norm. The choice of $\delta$ is discussed in \Cref{sec:experiments}.

In practice, we consider uncertainty sets of the form
\begin{equation*}
    \X_\alpha
    = \set{
        X \in \Sym_+^n \cap \R_+^{n \times n}
        \ \middle| \
        \begin{aligned}
        & \|X - X^\star\|_\triangle \leq \alpha \norm{X^\star}_\triangle, \\
        & \forall i,j \in [n]: X_{ij} \leq X_{ii}
        \end{aligned}
    }
\end{equation*}
with $\diam(\X_\alpha) = 2 \alpha \norm{X^\star}_\triangle$. A larger $\alpha$ yields a larger uncertainty set. From \Cref{sec:volt_ctrl_background} (\textit{e.g.}, \eqref{eq:diag-largest}), we know that $X^\star \in \X_\alpha$.

Furthermore, we can incorporate partial knowledge we may have of the network topology and/or line parameters by adding constraints to the description of $\X$. For example, if we know that the lowest common ancestor between buses $i,j$ in the network is bus $k$, then we can add the following constraint on $X$, which is a consequence of \eqref{eq:true_RX}:
\begin{equation}\label{eq:known-topology-constraint}
    X_{ij} =
    \begin{cases}
    0, & k = 0 \\
    X_{kk}, & \text{otherwise}.
    \end{cases}
\end{equation}
If we additionally know the values for some line parameters $x_{ij}$, we may be able to further constrain some entries of $X$, again by applying \eqref{eq:true_RX}.

\begin{assumption}
\label{as:bounded_vpar}
There exists a compact, convex set $\Vpar \subset \R^n$ such that $\forall t \geq 0:\ \vpar(t) \in \Vpar$. Furthermore, for some known $\epsilon > 0$,
\begin{align*}
& \forall \vpar \in \Vpar,\, X \in \X. \\
& \exists q^c \in \qlims \text{ s.t. } X q^c + \vpar \in [\vmin + (\etamax+\epsilon)\one,\, \vmax - (\etamax+\epsilon)\one].
\end{align*}
\end{assumption}

Our final assumption is about the existence of feasible control actions for the robust control oracle. This assumption can be interpreted as either a bound on the noise, or a requirement that the controllable reactive power injection be flexible enough to satisfy the demand of any admissible noise. It represents the reasonable assumption that a grid operator should have installed enough controllable reactive power injection capability to perform voltage control. Intuitively, the $\etamax$ padding is required for robustness to the noise $w(t)$, while the $\epsilon$ padding is required for robustness to model uncertainty (\textit{i.e.}, uncertainty about $X^\star$).

\begin{algorithm}[t!]
\caption{Online Robust Voltage Controller}
\label{alg:robust_online_volt_control}
Inputs
\begin{itemize}
    \item desired nominal squared voltage magnitude: $\vnom \in \R^n$
    \item limits on the squared voltage magnitude: $\vlims \subset \R^n$
    \item limits on the reactive power injection: $\qlims \subset \R^n$
    \item initial state: $v(1), q^c(0) \in \R^n$
    \item state and action cost matrices: $P_v, P_u \in \Sym^n_+$
    \item compact convex uncertainty set for the model parameter: $\X \subset \Sym_+^n \cap \R^{n \times n}_+$
    \item compact convex uncertainty set for exogenous voltage quantities: $\Vpar \subset \R^n$
    \item upper bound for noise: $\etamax > 0$
    \item robustness padding: $\epsilon > 0$
    \item weight for slack variable: $\beta > 0$
    \item weight for noise accuracy: $\delta > 0$
\end{itemize}
Procedure
\begin{enumerate}[leftmargin=*]
    \item Initialize an empty trajectory $D_0 = [\,]$. Set $t = 1$.
    \item Query the model chasing algorithm for a new consistent parameter estimate: $(\hat{X}_t, \hat\eta_t) \leftarrow \SEL[D_t]$.
    \begin{subequations}\label{eq:cmc}
    \begin{align}\hspace{-1mm}
        & \SEL[D_t] := \NCBC(P_t,\hat{X}_{t-1},\hat\eta_{t-1}) \label{eq:ncbc} \\
        & P_t := \left\{
        (\hat{X}, \hat\eta)
        \ \middle|\
        \begin{aligned}
        & \hat{X} \in \X,\ \hat\eta \in [0,\etamax] \\
        & \forall (v_i, v_{i+1}, u_i, q^c_i) \in D_t: \\
        &\ \ \|v_{i+1} - v_i - \hat{X}_t u_i\|_\infty \leq \hat\eta \\
        &\ \ v_{i+1} - \hat{X} q_i^c \in \Vpar
        \end{aligned}
        \right\} \label{eq:alg1_consistent_set}
    \end{align}
    \end{subequations}

    \item Query the robust control oracle for the next control action: $u(t) \leftarrow \CTRL_{\hat{X}_t,\hat\eta_t}(v(t))$.
    \begin{subequations}\label{eq:oracle_slack}
    \begin{align}
       \CTRL_{\hat{X}_t,\hat\eta_t}:\ \min_{u, \xi}\
       & \norm{\hat{v}' - \vnom}_{P_v}^2 + \norm{u}_{P_u}^2 + \beta \xi ^2
            \label{eq:oracle_slack_cost} \\
        \text{s.t.}\
        & u \in \R^n,\ \xi \in \R_+ \\
        & \qmin \preceq q^c(t-1) + u \preceq \qmax
            \label{eq:oracle_slack_constr_u} \\
        & \hat{v}' = v(t) + \hat{X}_t u
            \label{eq:oracle_slack_constr_dynamics} \\
        & k = \hat\eta_t + \rho \left(\frac{1}{\delta} + \norm{u}_2\right)
            \label{eq:oracle_slack_constr_buffer} \\
        & \vmin + (k-\xi) \one \preceq \hat{v}' \preceq \vmax - (k-\xi) \one
            \label{eq:oracle_slack_constr_robust}
    \end{align}
    \end{subequations}

    where $\rho = \delta\epsilon / (1 + \delta \|\qmax-\qmin\|_2)$.

    \item Apply the control action $u(t)$. Observe the system transition to $v(t+1) = v(t) + X^\star u(t) + w(t)$ and $q^c(t) = q^c(t-1) + u(t)$.
    \item Append $(v(t), v(t+1), u(t),q^c(t))$ to the trajectory:
    \[
        D_t = \left[ (v(i), v(i+1), u(i), q^c(i)) \right]_{i=1}^t.
    \]
    \item Increment $t \leftarrow t+1$. Repeat from Step (2).
\end{enumerate}
\end{algorithm}

\subsection{Main result}

We now state our main result, which is a finite-error bound for \Cref{alg:robust_online_volt_control}.

\begin{theorem}[Main Result]
\label{thm:main}
Under \Cref{as:bounded_w,as:bounded_X,as:bounded_vpar}, \Cref{alg:robust_online_volt_control} ensures that the voltage limits will be violated at most $\frac{2 \gamma(m)}{\rho} \diam(\X \times [0,\etamax]) + 1$ times, where $\rho = \frac{\delta\epsilon}{1+\delta\|\qmax-\qmin\|_2}$ and $\gamma(m)$ is the competitive ratio of the NCBC algorithm in $m$-dimensional Euclidean space, where $m = 1 + \frac{n(n+1)}{2}$.

If $\eta^\star$ is known, then the voltage limits will be violated at most $\frac{2 \gamma(m)}{\rho} \diam(\X) + 1$ times, where $\rho = \frac{\epsilon}{\|\qmax-\qmin\|_2}$ and $m = \frac{n(n+1)}{2}$.
\end{theorem}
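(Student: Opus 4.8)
The plan is to run an amortized (potential-style) argument that charges every voltage-limit violation to a quantum of movement of the model-chasing iterates $(\hat X_t, \hat\eta_t)$, and then to cap the total movement using the competitive guarantee of the NCBC subroutine. I would first observe that the consistent sets are nested, $P_1 \supseteq P_2 \supseteq \cdots$, since each transition only appends constraints in \eqref{eq:cmc}, and that the true pair $(X^\star, \eta^\star)$ satisfies every such constraint (by \Cref{as:bounded_w,as:bounded_X} and the definition of $\vpar$), so it lies in $\bigcap_t P_t \neq \emptyset$. Because $\norm{(\cdot,\cdot)}_{\triangle,\delta}$ is an isometry onto $\R^m$ with $m = 1 + n(n+1)/2$, the NCBC guarantee applies verbatim and bounds the total path length $\sum_t \norm{(\hat X_{t+1},\hat\eta_{t+1}) - (\hat X_t,\hat\eta_t)}_{\triangle,\delta}$ by $\gamma(m)$ times the offline optimum, which for nested bodies is at most $\diam(P_1) = \diam(\X \times [0,\etamax])$.

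Second, I would establish the two properties of the oracle \CTRL{} that the charging argument needs. For feasibility, a consistent estimate forces $v(t) - \hat X_t q^c(t-1) \in \Vpar$, so \Cref{as:bounded_vpar} supplies an admissible $q^c \in \qlims$ achieving $\hat{v}' \in [\vmin + (\etamax+\epsilon)\one,\, \vmax - (\etamax+\epsilon)\one]$; since $\rho = \delta\epsilon/(1+\delta\norm{\qmax-\qmin}_2)$ together with $\norm{u}_2 \le \norm{\qmax-\qmin}_2$ gives the buffer bound $k = \hat\eta_t + \rho(1/\delta + \norm{u}_2) \le \etamax + \epsilon$, the program is feasible with zero slack $\xi = 0$. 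For robustness, with $\xi = 0$ the constraint \eqref{eq:oracle_slack_constr_robust} places $\hat{v}'$ at $\infty$-margin $k$ inside $\vlims$, so a genuine violation $v(t+1) \notin \vlims$ certifies $\norm{v(t+1) - \hat{v}'}_\infty > k$, i.e.\ $\norm{(X^\star - \hat X_t) u(t) + w(t)}_\infty > \hat\eta_t + \rho(1/\delta + \norm{u(t)}_2)$.

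The crux is a learning-on-mistake lemma: a violation implies $\dist_{\triangle,\delta}\big((\hat X_t,\hat\eta_t),\, P_{t+1}\big) \ge \rho$. I would take any consistent $(\hat X,\hat\eta) \in P_{t+1}$, write its residual as $(X^\star-\hat X)u(t)+w(t) = \big[(X^\star-\hat X_t)u(t)+w(t)\big] + (\hat X_t - \hat X)u(t)$, and combine its consistency inequality $\norm{\,\cdot\,}_\infty \le \hat\eta$ with the triangle inequality and the row-norm bound $\norm{(\hat X_t-\hat X)u(t)}_\infty \le \norm{\hat X_t - \hat X}_\triangle \norm{u(t)}_2$ (which holds because every row of a symmetric matrix embeds into its upper triangle). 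The violation bound then yields $\rho/\delta + \rho\norm{u(t)}_2 < (\hat\eta-\hat\eta_t) + \norm{\hat X_t-\hat X}_\triangle\norm{u(t)}_2$; the $\norm{u(t)}_2$ terms are engineered to cancel, so either $\norm{\hat X_t-\hat X}_\triangle > \rho$ or $\delta(\hat\eta-\hat\eta_t) > \rho$, and in both cases $\norm{(\hat X,\hat\eta)-(\hat X_t,\hat\eta_t)}_{\triangle,\delta} \ge \rho$. Since the next iterate lies in $P_{t+1}$, each violation forces a move of size at least $\rho$.

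Finally, I would amortize: the per-violation moves lie on disjoint segments of the NCBC path, so (number of violations)$\times \rho$ is at most the total path length $\gamma(m)\diam(\X\times[0,\etamax])$, giving the stated count up to the leading constant and the additive $+1$. The $+1$ absorbs the single unavoidable violation arising from an initial state $v(1)$ that need not lie in $\vlims$ (and is attributable to no estimate), while careful treatment of the slack variable $\xi$—which can shrink the effective margin to $k-\xi$—accounts for the remaining factor. The known-$\eta^\star$ specialization follows by deleting the $\eta$ coordinate: $P_t \subset \X$, the norm collapses to $\norm{\cdot}_\triangle$ with $m = n(n+1)/2$, the buffer loses its $1/\delta$ offset, and $\rho$ simplifies to $\epsilon/\norm{\qmax-\qmin}_2$, reproducing the second bound. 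I expect the learning-on-mistake lemma to be the main obstacle, since it must convert a scalar $\infty$-norm voltage violation into guaranteed progress simultaneously along both uncertainty axes and independently of $\norm{u(t)}_2$—which is precisely why $\rho$ carries the $\norm{\qmax-\qmin}_2$ normalization and why $k$ includes the matching $\rho\norm{u}_2$ term—with the slack analysis being the secondary delicate point.
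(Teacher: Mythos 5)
Your proposal is correct, and in substance it follows the paper's decomposition: nestedness and consistency of the sets $P_t$, the isometric embedding of $(\Sym^n \times \R, \norm{\cdot}_{\triangle,\delta})$ into $(\R^m, \norm{\cdot}_2)$ (the paper's \Cref{thm:ncbc}), oracle feasibility from \Cref{as:bounded_vpar} combined with the consistency constraint $v(t) - \hat{X}_t q^c(t-1) \in \Vpar$, and robustness built on the row-norm bound of \Cref{lemma:tri_norm} with exactly the same engineered $k$ and $\rho$. The differences are in packaging, and they are worth noting. First, where the paper verifies the hypotheses of the framework theorem of Ho et al.\ (\Cref{thm:ho}) and then invokes it as a black box, you re-derive the mistake-charging argument inline: a violation at time $t+1$ forces every point of $P_{t+1}$---hence the next NCBC iterate, which lies in $P_{t+1}$---to be at distance at least $\rho$ from $(\hat{X}_t, \hat\eta_t)$, so the charged violations number at most $\gamma(m)\diam(\X \times [0,\etamax])/\rho$. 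This is essentially the proof of \Cref{thm:ho} specialized to a zero-mistake oracle ($M^\Pi_\rho = 0$), and your accounting in fact does not need the leading factor of $2$ in the stated bound; proving the stronger inequality proves the theorem. Second, your ``learning-on-mistake lemma'' is the contrapositive of \Cref{thm:oracle}: you derive $\rho/\delta + \rho\norm{u}_2 < (\hat\eta - \hat\eta_t) + \norm{\hat{X} - \hat{X}_t}_\triangle \norm{u}_2$ and conclude by a two-case disjunction (either $\norm{\hat{X} - \hat{X}_t}_\triangle > \rho$ or $\delta\abs{\hat\eta - \hat\eta_t} > \rho$), whereas the paper argues forward by maximizing the concave function $f(\rho_X) = \hat\eta + \frac{1}{\delta}\sqrt{\rho^2 - \rho_X^2} + \rho_X \norm{u}_2$ and then using $a + b \geq \sqrt{a^2 + b^2}$; the two arguments are equivalent, and yours avoids the calculus. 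The one point where you are no more rigorous than the paper is the slack/initialization edge case: your charging requires $\xi = 0$, which consistency guarantees only for $t \geq 2$ (at $t = 1$ the constraint $v(1) - \hat{X}_1 q^c(0) \in \Vpar$ is not yet imposed, so the slack may be active and a resulting violation at $t = 2$ goes uncharged). The paper absorbs this into the $\max\{1, M^\Pi_\rho\}$ factor of \Cref{thm:ho} together with its prose remark about solving the oracle without slack first; your sentence attributing the ``$+1$'' to the initial state and deferring the slack to ``careful treatment'' plays the same role, but if written out, your self-contained count becomes $\gamma(m)\diam/\rho + 2$, which still sits below the theorem's $2\gamma(m)\diam/\rho + 1$ except in the degenerate regime $\gamma(m)\diam < \rho$, where a short separate argument (near-exact initial model, hence no slack needed) closes the case.
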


To the best of our knowledge, this result is the first provable stability bound for voltage control in a setting where the network topology is unknown. It highlights that \Cref{alg:robust_online_volt_control} can ensure stability even after \emph{unknown} changes to the network topology, \textit{e.g.}, due to maintenance, failures, etc., without the need to perform system identification while remaining robust to any bounded and potentially adversarial perturbations satisfying \Cref{as:bounded_w,as:bounded_vpar}.

Intuitively, this result guarantees that the model chasing algorithm \SEL{} will learn a ``good enough'' model for control quickly. When the robust controller \CTRL{} makes a mistake, the model chasing algorithm will learn from that mistake and significantly reduce the set of consistent models. Because the initial set of consistent models is bounded, and this set shrinks a significant amount after each mistake, the total number of mistakes is bounded. Note that this finite mistake bound implies finite-time convergence to safe voltage limits without an explicit finite-time bound.

To interpret the error bounds in \Cref{thm:main}, we notice that they are proportional to the diameter of the parameter space and the competitive ratio $\gamma(m)$ of the NCBC algorithm, and inversely proportional to the oracle robustness margin $\rho$. Because of computational tractability concerns, our experiments implement \SEL{} with a greedy projection-based NCBC algorithm with $\gamma_\textnormal{proj}(m) = \pi (m-1) m^{m/2}$ \cite{argue2019nearly}, rather than the state-of-the-art Steiner point method which can achieve $\gamma_\textnormal{Steiner}(m) = m/2$~\cite{bubeck2020chasing}. As our case studies show, in practice the projection-based NCBC algorithm performs much better than the worst-case bound. We note that any other NCBC algorithm with a finite competitive ratio can be used in \eqref{eq:ncbc} in \Cref{alg:robust_online_volt_control}. Investigating whether widely-used estimation methods, like least squares, have a finite competitive ratio would be an interesting avenue for future research.

Note that for \Cref{thm:main} to hold, the optimization problem for the robust control oracle \CTRL{} should first be solved without the slack variable $\xi$ in \Cref{alg:robust_online_volt_control}. This ensures that if $(\hat{X}_t,\hat\eta_t)$ is sufficiently close enough to the true model, then the algorithm will not make a mistake. In the case that \CTRL{} is infeasible initially (\textit{e.g.}, when the initial model estimate is far from the true model), it should be solved again with a slack variable, which ensures feasibility. However, solving \CTRL{} twice is unnecessary in practice, and so we have written \Cref{alg:robust_online_volt_control} to reflect its practical implementation.

We outline a proof of \Cref{thm:main} in the next section. We want to highlight one piece of that proof that is of independent interest. In particular, a major step in the proof is to provide a feasibility guarantee for the robust control oracle component \CTRL{} of the algorithm, which is done in \Cref{thm:oracle}.

\section{Proofs}
\label{sec:proof}

We now prove our main result \Cref{thm:main}. Our proof builds on and adapts the approach of~\cite{ho2021online}, which outlines a general framework for integrating model chasing and robust control. To explain the general framework, we first consider a discrete-time nonlinear dynamical system
\[
    x_{t+1} = f_*(x_t, u_t) + w_t,
    \qquad
    x_0 \text{ given},
    \qquad
    (f_*, \mathbf{w}) \in \mathcal{F},
\]
where $x \in \mathcal{S} \subseteq \R^n$ is the system state and $u \in \mathcal{U} \subseteq \R^m$ is the control input. The unknown function $f_*$ and disturbance sequence $\mathbf{w} \in \ell^\infty(\Z_+; \R^n)$ belong to an uncertainty set $\mathcal{F}$, and the disturbance is bounded as $\norm{\mathbf{w}}_\infty \leq \etamax$. Assume that $\mathcal{F}$ has a \emph{compact parametrization} $(\mathbb{T}, \mathsf{K}, d)$, where $\mathbb{T}: \mathsf{K} \to \wp(\mathcal{F})$ is a mapping from a parameter space $\mathsf{K}$ to a set of functions and disturbances such that $\mathcal{F} \subseteq \bigcup_{\theta \in \mathsf{K}} \mathbb{T}[\theta]$. $\wp(\mathcal{F})$ denotes the powerset of $\mathcal{F}$. Let $d$ denote a metric on $\mathsf{K}$, so $(\mathsf{K}, d)$ is a compact metric space.

The control objective is specified as a sequence of indicator ``goal" functions $\mathcal{G} = (\mathcal{G}_0, \mathcal{G}_1, \dotsc)$. Each $\mathcal{G}_t: \mathcal{X} \times \mathcal{U} \to \{0,1\}$ encodes a desired condition per time step $t$:
\[
    \mathcal{G}_t(x_t, u_t) = \one[\text{$x_t$, $u_t$ violate desired condition at time $t$}].
\]
The main result of \cite{ho2021online} specifies a set of sufficient conditions for a finite-mistake guarantee---\textit{i.e.}, $\sum_{t=0}^\infty \mathcal{G}_t(x_t, u_t) < \infty$. These conditions decouple online robust control into separate online learning and robust control components. The online learning component requires a consistent model chasing algorithm \SEL, which takes as input the current observed trajectory $D_t = [(x_i, x_{i+1}, u_i)]_{i=1}^t$ and outputs an estimated parameter $\theta_t \in \mathsf{K}$ which must be \emph{consistent} with $D_t$.

\begin{definition}[Consistent Parameter]
\label{def:consistent}
We say $\theta \in \mathsf{K}$ is consistent with $D_t$ if there exists $(f,\mathbf{w}) \in \mathbb{T}[\theta]$ such that
\[
    \forall (x_t, x_{t+1}, u_t) \in D_t:\ x_{t+1} = f(x_t,u_t) + w_t.
\]
\end{definition}
Let $P_t$ denote the set of all parameters consistent with $D_t$; $P_t$ is called the \emph{consistent set}. We say \SEL{} is $\gamma$-competitive if $\sum_{t=1}^{\infty} d(\theta_t, \theta_{t-1}) \leq \gamma \max_{\theta \in \mathsf{K}} d(P_{\infty}, \theta)$ holds for a fixed constant $\gamma>0$, which we call the \emph{competitive ratio}.

The robust control component requires a control oracle \CTRL, which given the current state $x_t$ and a parameter $\theta_t$, outputs a control action $u_t = \CTRL_{\theta_t}(x_t)$ that is robust for all systems that are close to $\theta_t$. In particular, we call a control oracle \emph{$\rho$-robust} for control objective $\mathcal{G}$, if all trajectories in $S^\Pi[\rho; \theta]$ achieve $\mathcal{G}$ after finitely many mistakes. $S^\Pi[\rho; \theta]$ is defined as the set of all possible trajectories generated by $\CTRL_{\hat\theta}$ for all $\hat\theta$ such that $d(\theta, \hat\theta) \leq \rho$:
\[
    S^\Pi[\rho; \theta] = \left\{
        \begin{aligned}
            & D_\infty = [(x_t, x_{t+1}, u_t)]_{t=1}^\infty: \\
            &\quad u_t = \Pi_{\hat{\theta}}(x_t) \\
            &\quad x_{t+1} = f(x_t, u_t) + w_t
        \end{aligned}
        \,\middle|\,
        \begin{aligned}
        & (f, \mathbf{w}) \in \mathbb{T}[\theta], \\
        & d(\hat{\theta}, \theta) \leq \rho
        \end{aligned}
    \right\}
\]
Due to the page limit, we refer readers to~\cite{ho2021online} for a more detailed discussion of consistent model chasing algorithms and $\rho$-robust control oracles. As a summary, if \SEL{} chases consistent models and \CTRL{} is a robust oracle for $\mathcal{G}$, then the resulting $A_\CTRL(\SEL)$ algorithm achieves a finite mistake guarantee, which is stated in the following.

\begin{theorem} \label{thm:ho}
\cite[Theorem 2.5]{ho2021online}
Assume that \SEL{} chases consistent models and \CTRL{} is a robust oracle for objective $\mathcal{G}$. Then for any starting point $x_0$ and trajectory $[(x_t, u_t)]_{t=0}^\infty$ generated by $\mathcal{A}_\CTRL(\SEL)$ (illustrated in \Cref{fig:orc_framework}), the following mistake guarantees hold: (i) If \CTRL{} is robust, then $\sum_{t=0}^\infty \mathcal{G}_t(x_t, u_t) < \infty$; (ii) If \CTRL{} is uniformly $\rho$-robust and \SEL{} is $\gamma$-competitive, then
\[
    \sum_{t=0}^\infty \mathcal{G}_t(x_t, u_t) < \max\left\{1, M_{\rho}^\CTRL\right\}\left(\frac{2\gamma}{\rho} \diam(\mathsf{K}) +1\right)
\]
where $M_\rho^\CTRL$ denotes the worst case total mistakes of the $\rho$-robust control oracle \CTRL.
\end{theorem}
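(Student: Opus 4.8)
The plan is to run a competitive-analysis argument in which every control mistake is converted into a guaranteed shrinkage of the consistent set $P_t$, after which part~(i) follows from compactness and part~(ii) from the movement bound of \SEL{}. Throughout I use that the true system is realized by some parameter $\theta^\star$ that is consistent with every transition, so $\theta^\star \in P_t$ for all $t$ and hence $\theta^\star \in P_\infty$; in particular each $P_t$ is nonempty and the family is nested, $P_1 \supseteq P_2 \supseteq \cdots$. I also use that $(\mathsf K, d)$ is compact, hence totally bounded with $\diam(\mathsf K) < \infty$.

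The core step is a ball-removal lemma: whenever \CTRL{} incurs a mistake at a time $t$ that exceeds the oracle's per-phase mistake budget, observing the next transition purges from the consistent set every parameter within distance $\rho$ of the served estimate, i.e.\ $d(\theta_t, P_{t+1}) > \rho$. I would prove this by contradiction. If some $\theta'$ with $d(\theta', \theta_t) \le \rho$ remained consistent with $D_{t+1}$, then $\theta'$ would witness a system $(f,\mathbf w) \in \mathbb T[\theta']$ reproducing all observations; since the applied action is $u_t = \Pi_{\theta_t}(x_t)$ with $d(\theta_t, \theta') \le \rho$, the generated trajectory lies in $S^\Pi[\rho; \theta']$, so by $\rho$-robustness it achieves $\mathcal G$ after at most $M_\rho^\CTRL$ mistakes, contradicting that the budget had already been exceeded. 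This lemma is where the robustness of \CTRL{} and the consistency of \SEL{}'s output are married, and it is the part I expect to be most delicate: the robustness notion in $S^\Pi[\rho;\theta]$ fixes a single model along the trajectory, so some care is needed to apply it while the served estimate $\theta_t$ varies across a phase, which is exactly what the per-phase budget $M_\rho^\CTRL$ is introduced to absorb.

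For part~(i) the lemma gives finiteness with almost no further work: each budget-exceeding mistake purges the $\rho$-ball around its estimate from the (nested, shrinking) consistent set, so no later estimate can reenter that ball and the estimates at such mistakes form a $\rho$-separated subset of $\mathsf K$; total boundedness then forces finitely many of them, while qualitative robustness caps the mistakes between consecutive purges, giving $\sum_t \mathcal G_t(x_t,u_t) < \infty$.

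For part~(ii) I would make the count quantitative by partitioning time into phases, a new phase starting at each purge. Within one phase a common model stays within $\rho$ of the served estimates, so uniform $\rho$-robustness bounds the phase's mistakes by $\max\{1, M_\rho^\CTRL\}$; at a phase boundary $\theta_{t+1} \in P_{t+1}$ while $d(\theta_t, P_{t+1}) > \rho$, so \SEL{} is forced to move, and the total number $R$ of phases is therefore controlled by \SEL{}'s accumulated movement. Invoking $\gamma$-competitiveness, $\sum_t d(\theta_t,\theta_{t-1}) \le \gamma \max_{\theta} d(P_\infty,\theta) \le \gamma\,\diam(\mathsf K)$, so $R$ is $O(\gamma\,\diam(\mathsf K)/\rho)$; multiplying the per-phase budget by the number of phases and carefully tracking the boundary terms (which is where the factor of two and the additive constant in the statement appear) yields $\sum_t \mathcal G_t(x_t,u_t) < \max\{1, M_\rho^\CTRL\}\bigl(\tfrac{2\gamma}{\rho}\diam(\mathsf K) + 1\bigr)$.
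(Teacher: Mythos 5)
First, note the ground truth here: the paper does not prove this statement at all---it is imported verbatim from \cite[Theorem 2.5]{ho2021online}, and the text explicitly defers all details to that reference. So your proposal can only be judged against the cited work's argument. At the architectural level you have reconstructed it correctly: a phase decomposition, a per-phase mistake budget supplied by uniform $\rho$-robustness, the competitiveness bound $\sum_t d(\theta_t,\theta_{t-1}) \leq \gamma \max_{\theta \in \mathsf{K}} d(P_\infty,\theta) \leq \gamma\,\diam(\mathsf{K})$ to cap the number of phases, and compactness for part (i). That skeleton matches.

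However, your central ball-removal lemma is false as stated, and this is a genuine gap, not bookkeeping. You claim a budget-exceeding mistake at time $t$ forces $d(\theta_t, P_{t+1}) > \rho$. Take $\theta' = \theta^\star$: as you yourself observe, the true parameter is consistent forever, so $\theta^\star \in P_{t+1}$, yet nothing prevents $d(\theta_t,\theta^\star) \leq \rho$ at the mistake time. Your contradiction argument fails at exactly the point you flagged and then waved away: membership of the phase's trajectory (suffix) in $S^\Pi[\rho;\theta']$ requires \emph{every} served estimate along it to lie within $\rho$ of $\theta'$, whereas you control only the single estimate $\theta_t$. A phase can exceed the budget $M_\rho^\CTRL$ because its \emph{early} estimates were far from every consistent anchor (and those estimates shaped the state that produced the mistakes), while the estimate at the final mistake time is already close to $\theta^\star$; the budget does not ``absorb'' this drift. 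What more than $M_\rho^\CTRL$ mistakes in a phase actually forces is $\max_\tau d(\theta_\tau, \theta') > \rho$ for every anchor $\theta'$ consistent with the phase's transitions---that is, a lower bound on intra-phase \SEL{} movement, not removal of a $\rho$-ball from the consistent set (indeed, no geometric shrinkage of $P_t$ is used anywhere in the cited proof). The repair is standard: define phases by \SEL{} movement (cumulative movement at most $\rho/2$), take the phase-final estimate as anchor---its consistency with the phase's own transitions is exactly what \Cref{def:consistent} supplies---and conclude via the triangle inequality that all phase estimates lie within $\rho$ of the anchor, so uniform robustness caps each phase at $\max\{1, M_\rho^\CTRL\}$ mistakes while each completed phase consumes at least $\rho/2$ of the movement budget, yielding $\frac{2\gamma}{\rho}\diam(\mathsf{K})+1$ phases. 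In particular, the factor of $2$ is structural (the $\rho/2$ split), not a boundary term as you conjecture; and your phase-boundary step ``a purge forces \SEL{} to move more than $\rho$'' inherits the same flaw, as does the $\rho$-separated-set argument in your part (i), though the compactness idea there is the right one.
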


To apply \Cref{thm:ho} to prove \Cref{thm:main}, we need to prove that (i) the proposed algorithm \eqref{eq:cmc} chases consistent models and has a bounded competitive ratio, and (ii) the proposed robust algorithm in \eqref{eq:oracle} is a $\rho$-robust control oracle, for bounded disturbance in the system topology. In particular, the correspondence of the definitions is as follows. We have $\theta = (X,\eta)$, and
\begin{align*}
    & \mathsf{K} = \X \times [0,\etamax], \quad v(1), q^c(0) \text{ given} \\
    & d((X,\eta), (X',\eta')) = \|(X,\eta) - (X',\eta')\|_{\triangle,\delta} \\
    & \mathbb{T}[(X,\eta)] = \set{
        (f, \mathbf{w}) \ \middle| \
        \begin{aligned}
        & f(v, u) = v + X u,\
        \norm{\mathbf{w}}_\infty \leq \eta, \\
        & \forall t \geq 0:\
        \widehat{\vpar_0} + \sum_{\tau=1}^t w(\tau) \in \Vpar \\
        & \text{where } \widehat{\vpar_0} := v(1) - X q^c(0)
        \end{aligned}
    } \\
    & \mathcal{F} = \bigcup_{(X,\eta) \in \X \times [0,\etamax]} T[(X,\eta)] \\
    & \mathcal{G}_t(v(t)) = \one[v(t) \in \vlims].
\end{align*}

We begin by proving that the set $P_t$ defined in \eqref{eq:alg1_consistent_set} in     \Cref{alg:robust_online_volt_control} is consistent with the trajectory $D_t$.

\begin{lemma}[\SEL{} is consistent]
Suppose $D_T$ is a trajectory of voltage measurements and control actions taken up to time $T$:
\[
    D_T = \left[ (v(t), v(t+1), u(t), q^c(t)) \right]_{t=1}^T.
\]
The set
\begin{equation}
\label{eq:consistent-set}
    \hspace{-0.6em}
    P_T := \set{
        (\hat{X}, \hat\eta)
        \ \middle| \
        \begin{aligned}
        & \hat{X} \in \X,\ \hat\eta \in [0,\etamax], \\
        & \forall (v(t), v({t+1}), u(t), q^c(t)) \in D_T: \\
        &\quad \|v({t+1}) - v(t) - \hat{X} u(t)\|_\infty \leq \hat\eta \\
        &\quad v({t+1}) - \hat{X} q^c(t) \in \Vpar
        \end{aligned}
    }
    \hspace{-0.2em}
\end{equation}
is a consistent set for $D_T$, \textit{i.e.}, $(\hat{X}, \hat\eta)$ is consistent (\Cref{def:consistent}) if and only if $(\hat{X}, \hat\eta) \in P_T$.
\end{lemma}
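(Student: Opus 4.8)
The plan is to establish the biconditional by unfolding \Cref{def:consistent} for the concrete parametrization $(\mathbb{T}, \mathsf{K}, d)$ identified above and then re-expressing the two defining constraints of $\mathbb{T}$ in terms of the trajectory data. Fix a candidate $(\hat X, \hat\eta) \in \Sym^n \times \R$. By \Cref{def:consistent}, it is consistent with $D_T$ exactly when there exists $(f,\mathbf{w}) \in \mathbb{T}[(\hat X,\hat\eta)]$ reproducing every observed transition, i.e.\ $v(t+1) = f(v(t),u(t)) + w(t) = v(t) + \hat X u(t) + w(t)$ for each $(v(t),v(t+1),u(t),q^c(t)) \in D_T$. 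Since $\mathbb{T}$ forces $f(v,u) = v + \hat X u$, the disturbance is pinned down on the observed horizon, $w(t) = v(t+1) - v(t) - \hat X u(t)$ for $t = 1,\dots,T$, and the membership $\hat X \in \X$, $\hat\eta \in [0,\etamax]$ is simply membership in $\mathsf{K}$. It then remains to match (a) the noise bound $\norm{\mathbf{w}}_\infty \le \hat\eta$ and (b) the accumulated constraint $\widehat{\vpar_0} + \sum_{\tau=1}^t w(\tau) \in \Vpar$ against the two conditions in \eqref{eq:consistent-set}.

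Condition (a) is immediate: applying $\norm{\mathbf{w}}_\infty \le \hat\eta$ to $w(t) = v(t+1) - v(t) - \hat X u(t)$ yields precisely the first constraint $\norm{v(t+1) - v(t) - \hat X u(t)}_\infty \le \hat\eta$. The crux of the argument, and the step I expect to need the most care, is condition (b): showing the accumulated $\Vpar$-membership collapses to the per-step constraint $v(t+1) - \hat X q^c(t) \in \Vpar$. This rests on two telescoping sums. Substituting $w(\tau) = v(\tau+1) - v(\tau) - \hat X u(\tau)$ gives $\sum_{\tau=1}^t w(\tau) = \big(v(t+1) - v(1)\big) - \hat X \sum_{\tau=1}^t u(\tau)$, where the voltage increments telescope; and since $u(\tau) = q^c(\tau) - q^c(\tau-1)$, the control increments telescope as well, $\sum_{\tau=1}^t u(\tau) = q^c(t) - q^c(0)$. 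Adding $\widehat{\vpar_0} = v(1) - \hat X q^c(0)$ cancels the $v(1)$ and $\hat X q^c(0)$ terms, leaving the clean identity $\widehat{\vpar_0} + \sum_{\tau=1}^t w(\tau) = v(t+1) - \hat X q^c(t)$, so the $t$-th accumulated constraint is equivalent to the second constraint in \eqref{eq:consistent-set}.

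With this identity the two directions follow. For the forward direction, consistency supplies some $(f,\mathbf{w}) \in \mathbb{T}[(\hat X,\hat\eta)]$ reproducing $D_T$; the disturbance is forced to $w(t) = v(t+1) - v(t) - \hat X u(t)$, so (a) and (b) hold and $(\hat X,\hat\eta) \in P_T$. For the converse, given $(\hat X,\hat\eta) \in P_T$ I would exhibit a witness: take $f(v,u) = v + \hat X u$ and define $w(\tau) = v(\tau+1) - v(\tau) - \hat X u(\tau)$ for $\tau \le T$ and $w(\tau) = 0$ for $\tau > T$. The bound $\norm{\mathbf{w}}_\infty \le \hat\eta$ holds term-by-term on the horizon and trivially beyond (as $\hat\eta \ge 0$), while the telescoping identity gives the $\Vpar$-membership for every $t$ — on the horizon from the per-step constraints, and for $t > T$ because the partial sum stabilizes at $v(T+1) - \hat X q^c(T) \in \Vpar$. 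The one delicate point is the boundary index $t=0$, which has no matching element in $D_T$ and for which the constraint reduces to the base case $\widehat{\vpar_0} \in \Vpar$; I would dispatch it by noting that the per-step constraints in \eqref{eq:consistent-set} range exactly over the trajectory indices $t=1,\dots,T$, so that the accumulation and the consistent set are certified against precisely the same observed data, with the base case furnished by the given initial conditions $v(1), q^c(0)$.
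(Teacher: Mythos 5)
Your proposal is correct and takes essentially the same approach as the paper's proof: both unfold \Cref{def:consistent}, observe that $\mathbb{T}$ forces $f(v,u)=v+\hat{X}u$ so that the disturbance is pinned to $\hat{w}(t)=v(t+1)-v(t)-\hat{X}u(t)$ on the observed horizon, and use the telescoping identity $\widehat{\vpar_0}+\sum_{\tau=1}^{t}\hat{w}(\tau)=v(t+1)-\hat{X}q^c(t)$ to equate the accumulated $\Vpar$-constraint with the per-step constraint in \eqref{eq:consistent-set}. If anything, your witness construction is slightly more explicit than the paper's, which defines $\hat{w}$ only for $t\le T$ and leaves implicit the extension to $t>T$ (your choice $w(\tau)=0$, with partial sums stabilizing at $v(T+1)-\hat{X}q^c(T)\in\Vpar$).
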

\begin{IEEEproof}
Consider any $(\hat{X},\hat\eta) \in P_T$. For $t \in [T]$, define
\[
    \hat{f}(v,u) := v + \hat{X} u,
    \qquad
    \hat{w}(t) := v({t+1}) - v(t) - \hat{X} u(t)
\]
so $\|\hat{w}(t)\| \leq \hat\eta$ and $v(t+1) = \hat{f}(v(t), u(t)) + \hat{w}(t)$. Define $\widehat{\vpar_0} := v(1) - \hat{X} q^c(0)$, so for all $t \geq 0$,
\[
    \widehat{\vpar_0} + \sum_{\tau=1}^t w(\tau)
    = v(t+1) - \hat{X} q^c(t)
    \in \Vpar.
\]
Thus, $(\hat{f},\hat{w}) \in \mathbb{T}[(\hat{X}, \hat\eta)]$, so $(\hat{X},\hat\eta)$ is consistent with $D_T$.

Conversely, suppose $(\hat{X},\hat\eta)$ is consistent with $D_T$, which implies the existence of $\hat{f}(v,u) := v + \hat{X} u$ and $\mathbf{\hat{w}}$ satisfying $\norm{\mathbf{\hat{w}}}_\infty \leq \hat\eta$ such that $v(t+1) = \hat{f}(v(t),u(t)) + \hat{w}(t)$. Rearranging yields $\hat{w}(t) = v(t+1) - v(t) - \hat{X} u(t)$, so $(\hat{X},\hat\eta)$ satisfies the norm constraint in \eqref{eq:consistent-set}. Now define
\[
    \forall t \geq 0:\
    \widehat\vpar(t)
    := v(t+1) - \hat{X} q^c(t)
    = \widehat{\vpar_0} + \sum_{\tau=1}^t \hat{w}(t)
\]
so $\widehat\vpar(t) \in \Vpar$ satisfies the remaining constraint in \eqref{eq:consistent-set}.
\end{IEEEproof}

Observe that each $P_t$ is a closed, bounded, and convex set. Furthermore, $P_t$ is non-empty, since $(X^\star,\eta^\star) \in P_t$. Intuitively, $P_t$ is the smallest set containing all parameters that could generate the observed trajectory $D_t$ along with a corresponding admissible sequence of noise compatible with \Cref{as:bounded_w,as:bounded_X,as:bounded_vpar}.

The consistent sets are nested $P_t \subseteq P_{t-1}$, and we use our particular choice of norm $\norm{\cdot}_{\triangle,\delta}$ to establish a linear bijection between $(\Sym^n \times \R, \norm{\cdot}_{\triangle,\delta})$ and Euclidean space $(\R^m, \norm{\cdot}_2)$. This allows us to take advantage of any $\gamma(m)$-competitive NCBC algorithm in Euclidean space \cite{argue2019nearly,bubeck2020chasing}, where $m$ is the dimension of the space, to prove that \SEL{} is $\gamma(m)$-competitive. This is formalized in the following lemma.

\begin{lemma}[\SEL{} is competitive]
\label{thm:ncbc}
If the NCBC algorithm used in \SEL{} has competitive ratio $\gamma(m)$, then \SEL{} is $\gamma(m)$-competitive.
\end{lemma}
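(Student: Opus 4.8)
The plan is to reduce the claim to a standard Euclidean NCBC instance by exhibiting an explicit isometry between the parameter space $(\Sym^n \times \R, \norm{(X,\eta)}_{\triangle,\delta})$ and $(\R^m, \norm{\cdot}_2)$ with $m = 1 + n(n+1)/2$, and then transferring the competitive guarantee of the underlying Euclidean NCBC algorithm back through this isometry. Since an isometry preserves all the quantities appearing in the definition of competitiveness, the bound $\gamma(m)$ passes through unchanged.

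First I would define the linear map $\Phi : \Sym^n \times \R \to \R^m$ that stacks the upper-triangular entries of $X$ together with the scaled scalar $\delta\eta$, so that $\Phi(X,\eta) = \big(\,(X_{ij})_{1 \le i \le j \le n},\ \delta\eta\,\big)$. This map is a bijection: because $X$ is symmetric its lower triangle is determined by its upper triangle, and because $\delta > 0$ the value $\eta$ is recovered from the last coordinate. Reading off the definition of $\norm{\cdot}_{\triangle,\delta}$ directly gives $\norm{\Phi(X,\eta)}_2 = \norm{(X,\eta)}_{\triangle,\delta}$, so $\Phi$ is an isometry and hence $d(\theta,\theta') = \norm{\Phi(\theta) - \Phi(\theta')}_2$ for all parameters $\theta,\theta'$.

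Next I would push the nested consistent sets $P_t$ through $\Phi$. Because $\Phi$ is a linear bijection, each image $\Phi(P_t)$ is again closed, bounded, and convex, and the nesting $P_t \subseteq P_{t-1}$ is preserved as $\Phi(P_t) \subseteq \Phi(P_{t-1})$; these are exactly the admissible inputs for a Euclidean NCBC algorithm in $\R^m$. Interpreting the call $\mathrm{NCBC}(P_t, \cdot)$ in \eqref{eq:ncbc} as running the Euclidean algorithm on the sequence $\{\Phi(P_t)\}$, it produces iterates $z_t \in \Phi(P_t)$, and the points selected by \SEL{} are $\theta_t = \Phi^{-1}(z_t)$. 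The isometry then gives $\sum_t d(\theta_t,\theta_{t-1}) = \sum_t \norm{z_t - z_{t-1}}_2$, and the Euclidean $\gamma(m)$-competitiveness bounds the right-hand side by $\gamma(m)$ times the offline benchmark in $\R^m$. Because $\Phi$ commutes with intersection, $\Phi(P_\infty) = \Phi(\bigcap_t P_t) = \bigcap_t \Phi(P_t)$, so the benchmark transfers as $\max_z \dist(\Phi(P_\infty), z) = \max_{\theta \in \mathsf{K}} d(P_\infty, \theta)$; pulling back yields $\sum_t d(\theta_t,\theta_{t-1}) \le \gamma(m)\,\max_{\theta \in \mathsf{K}} d(P_\infty,\theta)$, which is precisely the definition of $\gamma(m)$-competitiveness for \SEL{}.

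The conceptual content here is light, and I expect the only real care is needed in the bookkeeping at the two ends of the competitive inequality. Specifically, the main points to verify are (a) that $\Phi$ is a genuine isometry for the \emph{weighted} norm, which forces the factor $\delta$ to sit on the $\eta$-coordinate so that the term $\delta^2\eta^2$ appears correctly, and (b) that the offline benchmark used in the Euclidean competitive definition coincides, under $\Phi$, with the benchmark $\max_{\theta \in \mathsf{K}} d(P_\infty,\theta)$ used for \SEL{}, which rests on $\Phi$ commuting with intersections and is immediate from bijectivity. Everything else—convexity, closedness, boundedness, and nesting of the images—follows at once from linearity of $\Phi$, so no substantial estimation or case analysis is required.
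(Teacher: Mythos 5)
Your proposal is correct and follows essentially the same route as the paper: the paper's proof (deferring details to its preliminary version) rests precisely on the existence of a norm-preserving linear bijection between $(\Sym^n \times \R, \norm{\cdot}_{\triangle,\delta})$ and $(\R^m, \norm{\cdot}_2)$ with $m = 1 + \frac{n(n+1)}{2}$, which you construct explicitly by stacking the upper-triangular entries of $X$ with $\delta\eta$. Your writeup simply makes explicit the bookkeeping (nesting, convexity, and benchmark transfer under the isometry) that the paper leaves implicit.
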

\begin{IEEEproof}
The proof is similar to \cite[Lemma 2]{yeh2022robust}, except that learning $\hat\eta$ adds an additional dimension to the parameter space. That is, there exists a norm-preserving linear bijection between $(\Sym^n \times \R, \norm{\cdot}_{\triangle,\delta})$ and Euclidean space $(\R^m, \norm{\cdot}_2)$.
\end{IEEEproof}

Finally, we show that our controller \CTRL{} is $\rho$-robust. In particular, we prove that $\CTRL_{\hat{X}}$ makes no mistakes ($M_\rho^\Pi = 0$) given consistent parameters $(\hat{X},\hat\eta) \in P_t$.

\begin{theorem}[\CTRL{} is $\rho$-robust]
\label{thm:oracle}
Under \Cref{as:bounded_w,as:bounded_X,as:bounded_vpar}, suppose $(\hat{X},\hat\eta) \in P_t$, where $P_t$ is given in \eqref{eq:consistent-set} for $t\geq 1$. Define $\rho = \frac{\delta\epsilon}{1+\delta\|\qmax-\qmin\|_2}$. Then, the following optimization problem is feasible:
\begin{subequations}
\label{eq:oracle}
\begin{align}
    \min_{u \in \R^n}\quad & \norm{\hat{v}' - \vnom}_{P_v}^2 + \norm{u}_{P_u}^2 \label{eq:oracle_cost} \\
    \text{s.t.}\quad
    & \qmin \preceq q^c(t-1) + u \preceq \qmax
        \label{eq:oracle_constr_u} \\
    & \hat{v}' = v(t) + \hat{X}u
        \label{eq:oracle_constr_dynamics} \\
    & k = \hat\eta + \rho \left( \frac{1}{\delta} + \norm{u}_2 \right)
        \label{eq:oracle_constr_buffer} \\
    & \vmin + k \one \preceq \hat{v}' \preceq \vmax - k \one.
        \label{eq:oracle_constr_robust}
\end{align}
\end{subequations}
Further, the solution of \eqref{eq:oracle}, $u(t)$, guarantees voltage stability for all $(X,\eta) \in \X \times [0,\etamax]$ such that $\|(X,\eta)-(\hat{X},\hat\eta)\|_{\triangle,\delta} \leq \rho$. That is, $v(t) + X u(t) + w(t)\in \vlims$ for all $w(t)$ such that $\norm{w(t)}_\infty \leq \eta$.
\end{theorem}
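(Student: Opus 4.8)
The plan is to establish \Cref{thm:oracle} in two stages matching its two claims: first, that the program \eqref{eq:oracle} admits a feasible point, and second, that any feasible $u(t)$ (in particular the optimizer) keeps the realized next state inside $\vlims$ for every admissible $(X,\eta,w)$. The common thread is that the scalar buffer $k = \hat\eta + \rho(1/\delta + \norm{u}_2)$ is engineered so that it simultaneously absorbs two distinct error sources — the model mismatch $(X-\hat{X})u$ and the noise gap $\eta-\hat\eta$ — whose joint size is governed by $\norm{(X,\eta)-(\hat{X},\hat\eta)}_{\triangle,\delta}\le\rho$.

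For feasibility I would exhibit an explicit $u$. Since $(\hat{X},\hat\eta)\in P_t$ is consistent, the $\Vpar$-constraint in \eqref{eq:consistent-set} (equivalently, membership in $\mathbb{T}[(\hat{X},\hat\eta)]$) gives $\widehat{\vpar} := v(t) - \hat{X}q^c(t-1) \in \Vpar$. Applying \Cref{as:bounded_vpar} with $X=\hat{X}$ and $\vpar=\widehat{\vpar}$ yields some $q^c\in\qlims$ with $\hat{X}q^c+\widehat{\vpar}\in[\vmin+(\etamax+\epsilon)\one,\ \vmax-(\etamax+\epsilon)\one]$. Setting $u=q^c-q^c(t-1)$ makes \eqref{eq:oracle_constr_u} and \eqref{eq:oracle_constr_dynamics} hold with $\hat{v}'=\hat{X}q^c+\widehat{\vpar}$, so it remains to check that the tighter robust box \eqref{eq:oracle_constr_robust} is implied, i.e. that $k\le\etamax+\epsilon$. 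Here I would use $\hat\eta\le\etamax$ together with $\norm{u}_2=\norm{q^c-q^c(t-1)}_2\le\norm{\qmax-\qmin}_2$ (both iterates lie in the box $\qlims$), so that the definition $\rho=\delta\epsilon/(1+\delta\norm{\qmax-\qmin}_2)$ gives exactly $\rho(1/\delta+\norm{u}_2)\le\epsilon$, hence $k\le\etamax+\epsilon$.

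For robustness, take any feasible $u=u(t)$, so $\hat{v}'=v(t)+\hat{X}u$ satisfies $\vmin+k\one\preceq\hat{v}'\preceq\vmax-k\one$. The realized state is $v(t)+Xu+w=\hat{v}'+(X-\hat{X})u+w$, so it suffices to show $\norm{(X-\hat{X})u+w}_\infty\le k$. The noise contributes at most $\eta$, and the central estimate is the matrix inequality $\norm{(X-\hat{X})u}_\infty\le\norm{X-\hat{X}}_\triangle\,\norm{u}_2$, which I would prove row-by-row via Cauchy--Schwarz, observing that for a symmetric matrix the squared $\ell_2$-norm of any single row is a sum over a set of \emph{distinct} upper-triangular entries and is therefore at most $\norm{\cdot}_\triangle^2$. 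Finally, $\norm{(X,\eta)-(\hat{X},\hat\eta)}_{\triangle,\delta}\le\rho$ gives both $\norm{X-\hat{X}}_\triangle\le\rho$ and $\eta-\hat\eta\le\rho/\delta$, whence $\norm{(X-\hat{X})u}_\infty+\eta\le\rho\norm{u}_2+\hat\eta+\rho/\delta=k$, closing the argument.

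The main obstacle is the tight bookkeeping in the buffer $k$: the proof must spend the $\etamax$ and $\epsilon$ padding of \Cref{as:bounded_vpar} in precisely the right proportions so that feasibility holds for the constructed $u$ while robustness holds for every perturbation of size $\rho$. This forces the specific value of $\rho$ (and the role of $\delta$ in trading off $X$- against $\eta$-uncertainty inside $\norm{\cdot}_{\triangle,\delta}$), and it is why the bound on $\norm{(X-\hat{X})u}_\infty$ must use the symmetry-aware, dimension-free constant $1$ rather than a looser Frobenius or spectral estimate — any slack there would break the tight fit between the feasibility bound $\rho(1/\delta+\norm{u}_2)\le\epsilon$ and the robustness identity $k=\hat\eta+\rho/\delta+\rho\norm{u}_2$.
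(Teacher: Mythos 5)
Your proof is correct and takes essentially the same route as the paper's: the identical feasibility construction (use consistency of $(\hat{X},\hat\eta)$ to get $\widehat{\vpar}(t-1) := v(t) - \hat{X} q^c(t-1) \in \Vpar$, apply \Cref{as:bounded_vpar} to obtain $q^c \in \qlims$, set $u = q^c - q^c(t-1)$, and use $\norm{u}_2 \leq \norm{\qmax - \qmin}_2$ so that $k \leq \etamax + \epsilon$), and the identical robustness decomposition via $\norm{(X-\hat{X})u}_\infty \leq \norm{X-\hat{X}}_\triangle \norm{u}_2$, which is exactly the paper's \Cref{lemma:tri_norm} (whose row-wise Cauchy--Schwarz proof you correctly sketch). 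The only deviation is a harmless streamlining: where the paper maximizes the concave function $\hat\eta + \tfrac{1}{\delta}\sqrt{\rho^2 - \rho_X^2} + \rho_X\norm{u}_2$ over $\rho_X$ and then invokes $a + b \geq \sqrt{a^2+b^2}$, you relax the ball constraint componentwise ($\rho_X \leq \rho$ and $\abs{\eta - \hat\eta} \leq \rho/\delta$) and verify the chosen $k = \hat\eta + \rho(1/\delta + \norm{u}_2)$ directly, arriving at the same conclusion.
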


Observe that \eqref{eq:oracle} corresponds to \eqref{eq:oracle_slack} in \Cref{alg:robust_online_volt_control} with the slack variable set to zero. We note that the robustness margin $\rho$ decreases as $\qlims$ increase. The intuitive reason is that the voltage is more sensitive to changes in $\hat{X}$ when the range of possible $u$'s expands. Therefore, a fixed voltage buffer of $\epsilon$ in constraints \eqref{eq:oracle_slack_constr_buffer} and \eqref{eq:oracle_constr_buffer} affords less robustness to changes in $\hat{X}$ as $\qlims$ gets larger.

\begin{IEEEproof}[Proof of \Cref{thm:oracle}]
First, we will show that the following two conditions are sufficient for feasibility of the optimization problem and $\rho$-robustness for the solution $u$.
\begin{itemize}[nosep]
    \item Feasibility: $k \leq \etamax + \epsilon$
    \item Robustness: $k \geq \hat\eta + \rho\sqrt{\frac{1}{\delta^2} + \norm{u}_2^2}$
\end{itemize}
Then, we will show that our choices of $k$ and $\rho$ satisfy these sufficient conditions.

To derive the sufficient condition for feasibility, define
\[
    \widehat\vpar(t-1) := v(t) - \hat{X} q^c(t-1)
\]
as the conjectured noise when we assume the underlying parameter is $\hat{X}$. Since $\hat{X} \in P_t$ and $P_t \subseteq P_{t-1}$, we have $\widehat\vpar(t-1) \in \Vpar$. Then, by \Cref{as:bounded_vpar}, there exists $q^c \in \qlims$ such that
\[
    \vmin + (\etamax+\epsilon)\one
    \preceq \hat{X} q^c + \widehat\vpar(t-1)
    \preceq \vmax - (\etamax+\epsilon)\one.
\]
Set $u = q^c - q^c(t-1)$ (which satisfies \eqref{eq:oracle_constr_u}) and define
\begin{align*}
    \hat{v}'(u)
    &:= v(t) + \hat{X} u
    = v(t) + \hat{X}[q^c - q^c(t-1)] \\
    &= \hat{X} q^c + \widehat\vpar(t-1).
\end{align*}
Recalling \eqref{eq:vol_dyn2}, we can interpret $\hat{v}'(u)$ as the one-step voltage prediction (without disturbance) under the model $\hat{X}$ given control action $u$ and the current voltage $v(t)$. We thus have
\[
    \vmin + (\etamax+\epsilon)\one
    \preceq \hat{v}'(u)
    \preceq \vmax - (\etamax+\epsilon)\one.
\]
Therefore, as long as $k \leq \etamax + \epsilon$, $u$ will satisfy constraint \eqref{eq:oracle_constr_robust}.

Next, we derive the sufficient condition for robustness. Let $u$ be a solution of \eqref{eq:oracle}, so it satisfies \eqref{eq:oracle_constr_robust}. Let $(X,\eta) \in \X \times [0,\etamax]$ be arbitrary parameters satisfying $\|(X,\eta)-(\hat{X},\hat\eta)\|_{\triangle,\delta} \leq \rho$. Define $\rho_X := \|X-\hat{X}\|_\triangle$. By \Cref{lemma:tri_norm},
\begin{equation}\label{eq:xhat_x}
    -\rho_X \norm{u}_2 \one \preceq (X - \hat{X}) u \preceq \rho_X \norm{u}_2 \one.
\end{equation}
Furthermore, suppose
\begin{equation}\label{eq:bound_wt}
    -\eta\one \preceq w(t) \preceq \eta\one.
\end{equation}
Adding together the 3 inequalities \eqref{eq:oracle_constr_robust}, \eqref{eq:xhat_x}, \eqref{eq:bound_wt} yields
\begin{align*}
    \vmin + (k - \rho_X \norm{u}_2 - \eta)\one
    &\preceq v(t) + Xu + w(t) \\
    &\preceq \vmax - (k - \rho_X \norm{u}_2 - \eta)\one.
\end{align*}
Clearly, if $k - \rho_X \norm{u}_2 - \eta \geq 0$, then the desired robustness condition is satisfied. Since
\[
    \|(X,\eta)-(\hat{X},\hat\eta)\|_{\triangle,\delta}^2
    = \rho_X^2 + \delta^2 \abs{\eta-\hat\eta}^2
    \leq \rho^2,
\]
we have $\abs{\eta-\hat\eta} \leq \frac{1}{\delta} \sqrt{\rho^2 - \rho_X^2}$. This implies $\eta \leq \hat\eta + \frac{1}{\delta} \sqrt{\rho^2 - \rho_X^2}$. Therefore, we can express the robustness condition in terms of $\hat\eta$:
\[
    k \geq \hat\eta + \frac{1}{\delta} \sqrt{\rho^2 - \rho_X^2} + \rho_X \norm{u}_2
    =: f(\rho_X).
\]
For $\rho > 0$, $f(\rho_X)$ is strictly concave and twice-differentiable and therefore achieves its maximum when $f'(\rho_X) = 0$. This maximum value is
$
    \hat\eta + \rho \sqrt{\frac{1}{\delta^2} + \norm{u}_2^2}
$.
Thus, if $k$ is at least this value, then we achieve robustness.

Finally, we show that our choices of $k$ and $\rho$ satisfy the sufficient conditions. Since $a + b \geq \sqrt{a^2+b^2}$ for all $a,b \geq 0$, our choice of $k$ satisfies the robustness condition:
\[
    k = \hat\eta + \rho \left( \frac{1}{\delta} + \norm{u}_2 \right)
    \geq \hat\eta + \rho \sqrt{\frac{1}{\delta^2} + \norm{u}_2^2}.
\]
Note that while setting $k = \hat\eta + \rho \sqrt{\frac{1}{\delta^2} + \norm{u}_2^2}$ would also satisfy the robustness condition, this expression would make \eqref{eq:oracle} a nonconvex optimization problem.

The remaining step is to satisfy the feasibility condition. We must choose $\rho$ such that
$
    \hat\eta + \rho \left( \frac{1}{\delta} + \norm{u}_2 \right)
    \leq \etamax + \epsilon
$.
Since $\hat\eta \leq \etamax$, it suffices to find $\rho$ such that
$
    \rho \left( \frac{1}{\delta} + \norm{u}_2 \right) \leq \epsilon
$.
As $\norm{u}_2 \leq \|\qmax-\qmin\|_2$, setting $\rho = \frac{\delta\epsilon}{1 + \delta\norm{\qmax-\qmin}_2}$ satisfies the inequality.
\end{IEEEproof}

In the case where $\eta^\star$ is known, a similar proof shows that $k = \eta^\star + \rho \norm{u}_2$ and $\rho = \frac{\epsilon}{\|\qmax-\qmin\|_2}$ satisfy feasibility and robustness. (This can be seen as the $\delta \to \infty$ limiting case of \Cref{thm:oracle} such that consistent model chasing only updates $\hat{X}$ and keeps $\hat\eta=\eta^\star$ fixed.)

\begin{lemma}\label{lemma:tri_norm}
For all $A \in \Sym^n$, $b \in \R^n$, and $\alpha \in \R_+$,
\[
    \norm{A}_\triangle \leq \alpha
    \quad\implies\quad
    -\alpha \norm{b}_2 \one \preceq A b \preceq \alpha \norm{b}_2 \one.
\]
\end{lemma}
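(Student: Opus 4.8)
The plan is to reduce the stated vector inequality to a coordinatewise bound and then apply Cauchy--Schwarz. The two inequalities $-\alpha\norm{b}_2\one \preceq Ab \preceq \alpha\norm{b}_2\one$ are equivalent to $\abs{(Ab)_i} \leq \alpha\norm{b}_2$ for every coordinate $i \in [n]$, so it suffices to bound each $\abs{(Ab)_i}$. Writing the $i$-th coordinate as $(Ab)_i = \inner{A_{i,:}}{b}$, where $A_{i,:}$ denotes the $i$-th row of $A$, Cauchy--Schwarz gives $\abs{(Ab)_i} \leq \norm{A_{i,:}}_2\,\norm{b}_2$. Thus the entire lemma reduces to the single claim that $\norm{A_{i,:}}_2 \leq \norm{A}_\triangle$ for each $i$, after which the hypothesis $\norm{A}_\triangle \leq \alpha$ immediately yields $\abs{(Ab)_i} \leq \alpha\norm{b}_2$.

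The crux of the argument---and the one place where symmetry of $A$ is essential---is establishing $\norm{A_{i,:}}_2 \leq \norm{A}_\triangle$. I would show that $\norm{A_{i,:}}_2^2 = A_{ii}^2 + \sum_{j \neq i} A_{ij}^2$ is a sum over a subset of the squared entries comprising $\norm{A}_\triangle^2 = \sum_{k \leq l} A_{kl}^2$. The diagonal term $A_{ii}^2$ is present directly. For the off-diagonal part I split $\sum_{j \neq i} A_{ij}^2 = \sum_{j > i} A_{ij}^2 + \sum_{j < i} A_{ij}^2$; the first group already lies in the upper triangle of row $i$, while for the second group symmetry lets me rewrite $A_{ij} = A_{ji}$ with $j < i$, placing these in the upper triangle of column $i$. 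Under the map $j \mapsto (\min(i,j),\, \max(i,j))$ these all correspond to distinct index pairs $(k,l)$ with $k \leq l$, so, since every summand in $\norm{A}_\triangle^2$ is nonnegative, the row's squared norm cannot exceed the full upper-triangle sum.

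Chaining the two steps, for each $i$ we obtain $\abs{(Ab)_i} \leq \norm{A_{i,:}}_2\,\norm{b}_2 \leq \norm{A}_\triangle\,\norm{b}_2 \leq \alpha\norm{b}_2$, which is precisely the desired coordinatewise bound and hence the claimed vector inequality. The main obstacle is not computational but rather stating the subset/accounting step cleanly: one must confirm that, after using $A_{ij} = A_{ji}$ to move the below-diagonal entries of row $i$ into the upper triangle, the resulting positions are genuinely distinct and thus incur no double-counting. Once this injectivity is noted, the inequality follows from nonnegativity of squares alone, and no structure of $A$ beyond symmetry is needed.
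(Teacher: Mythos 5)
Your proof is correct. Note that this paper does not actually contain its own argument for this lemma---the proof given is simply ``See \cite{yeh2022robust}''---so there is no in-paper derivation to compare against; your write-up supplies a complete, self-contained argument for what the paper defers to a reference. The route you take (reduce to the coordinatewise bound $\abs{(Ab)_i} \leq \alpha \norm{b}_2$, apply Cauchy--Schwarz to each row, then establish $\norm{A_{i,:}}_2 \leq \norm{A}_\triangle$ by mapping the entries of row $i$ into distinct upper-triangle positions via $j \mapsto (\min(i,j), \max(i,j))$) is the natural one for this statement, and your explicit check that this map is injective---the single point where symmetry of $A$ is used---is precisely the step that makes the subset-of-nonnegative-terms argument airtight.
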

\begin{IEEEproof}
See \cite{yeh2022robust}.
\end{IEEEproof}

Finally, combining \Cref{thm:oracle} with \Cref{thm:ncbc} and applying \Cref{thm:ho} completes the proof of \Cref{thm:main}.

\section{Case Study}
\label{sec:experiments}

We demonstrate the effectiveness of \Cref{alg:robust_online_volt_control} using a case study based on a single-phase 56-bus network ($n=55$) from the Southern California Edison (SCE) utility, with line parameters $r_{ij}, x_{ij}$ from \cite[Table 1]{farivar2012optimal}. Even though our algorithm only has guarantees for the linear power flow model \eqref{eq:simplified-distflow}, we show that our algorithm works well on both the linear model and the more realistic nonlinear DistFlow model \eqref{eq:nonlinear-distflow}.

\begin{figure}[!tbp]
\centering
\begin{minipage}[b]{0.05\textwidth}
\includegraphics[width=\textwidth]{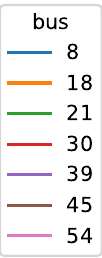}\vspace{2mm}
\end{minipage}
\subfloat[]{%
    \label{fig:linear-voltage-profile}%
    \includegraphics[width=0.4\columnwidth]{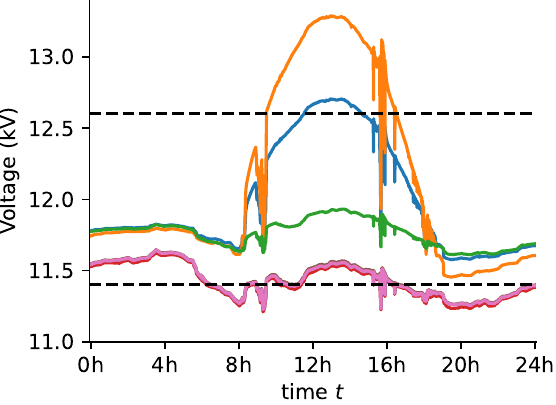}%
}%
\subfloat[]{%
    \label{fig:nonlinear-voltage-profile}%
    \includegraphics[width=0.4\columnwidth]{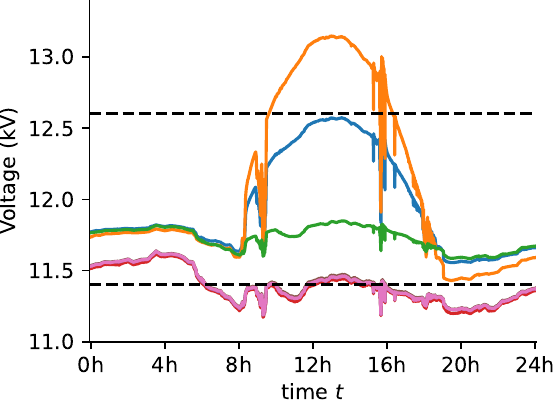}%
}
\caption{Voltage profile of 7 buses without control, simulated with \textbf{(a)} linear dynamics \eqref{eq:simplified-distflow} and \textbf{(b)} nonlinear balanced AC dynamics \eqref{eq:nonlinear-distflow}.}
\label{fig:no-control}
\end{figure}

\subsection{Experimental Setup}

Following \cite{qu2020optimal}, we adapt real-world load and PV data from \cite{bernstein2019real} for the 56-bus network by adding power injection (scaled by the PV generation) at buses $\mathcal{C} = \{2, 4, 7, 8, 9, 10, 11, 12, 13, 14, 15, 16, 19, 20, 23, 25, 26, 32\}$. Exogenous active and reactive power injection measurements are taken at each bus at 6-second intervals over a 24-hour period. \Cref{fig:no-control} plots these values for several buses to illustrate the setting considered. We assume that controllers with reactive power injection capacity are available at every node. The network parameters used in our experiments are:
\begin{itemize}
    \item nominal squared voltage magnitude at the substation \\ $v^0 = \vnom = (12\si{kV})^2$
    \item squared voltage magnitude limits \\ $\vlims = [0.95, 1.05] \text{pu} = [11.4^2, 12.6^2] \si{kV}^2$
    \item reactive power injection limits \\ $\qlims = [-0.24, 0.24] \text{MVar}$
    \item state and input cost matrices $P_v = 0.1 I$, $P_u = 10 I$
    \item initial state $v(1) = R^\star p(0) + X^\star  q^e(0) + v^0 \one$, $q^c(0) = \zero$
\end{itemize}

In comparison to previous papers in the voltage control literature, our reactive power injection limits $\qlims$ are slightly more generous than $\pm 0.2$ MVar used in, \textit{e.g.}, \cite{qu2020optimal}. We choose $\pm 0.24$ MVar because even a controller with perfect knowledge of the future would need reactive power injection capabilities of at least $\pm 0.238$ MVar in order to maintain $v(t) \in [\vmin, \vmax]$ (if $\qmin = -\qmax$) under linear dynamics \eqref{eq:simplified-distflow}.

We set $\etamax = 10$, which upper-bounds the maximum change in exogenous noise observed in our data, which is $\approx 8.6$:
\[
    \eta^\star = \max_t \norm{R^\star (p(t) - p(t-1)) + X^\star (q^e(t) - q^e(t-1))}_\infty.
\]
We fix $\epsilon=0.1$. In order to satisfy the requirement in \Cref{as:bounded_vpar} that $v(t) \in [\vmin + (\etamax+\epsilon), \vmax - (\etamax+\epsilon)]$, the reactive power injection capabilities must exceed $\pm 0.528$ MVar. As we show in experiments with only $\pm 0.24$ MVar range of control, though, \Cref{as:bounded_vpar} does not need to be fully satisfied in order for our method to still provide strong empirical results.

For the robust controller \CTRL, we set slack variable weight $\beta = 100$ and $\Vpar = [\underline\vpar, \overline\vpar]$ to be a rectangle around the true noise. Under linearized system dynamics, $\vpar(t)$ is calculated as described in \Cref{sec:volt_ctrl_background}, and then we set
\[
    \forall i \in [n]:\
    \underline\vpar_i = \min_t \vpar_i(t),\quad
    \overline\vpar_i = \max_t \vpar_i(t).
\]
Under nonlinear system dynamics, we approximate $\vpar(t)$ as the nodal squared voltage magnitudes when $q^c(t) = 0$ (as shown in \Cref{fig:no-control}), and we add $0.5\si{kV}^2$ padding which empirically suffices as a convex outer approximation of $\Vpar$:
\[
    \underline\vpar_i = \min_t \vpar_i(t) - 0.5,\quad
    \overline\vpar_i = \max_t \vpar_i(t) + 0.5.
\]

As mentioned previously, we use a greedy projection-based NCBC algorithm \cite{argue2019nearly} in \SEL{} that minimizes the movement distance $\|(\hat{X}_t,\hat\eta_t) - (\hat{X}_{t-1},\hat\eta_{t-1})\|_{\triangle,\delta}$ between nested convex sets $P_t \subseteq P_{t-1}$:
\begin{equation}\label{eq:ncbc-proj}
\begin{aligned}
    & \NCBC_\text{proj}(P_t, \hat{X}_{t-1}, \hat\eta_{t-1}) \\
    & := \argmin_{(X,\eta) \in P_t} \|(X,\eta) - (\hat{X}_{t-1},\hat\eta_{t-1})\|_{\triangle,\delta}.
\end{aligned}
\end{equation}
This achieves competitive ratio $\gamma_\text{proj}(m) = \pi (m-1) m^{m/2}$.

To keep the optimization problem \eqref{eq:cmc} computationally tractable for consistent model chasing, our implementation does not use the full trajectory $D$ as in the constraints of the consistent set \eqref{eq:consistent-set}. Instead, we include the 20 latest observations and 80 more observations sampled uniformly at random $(v(t), v(t+1), u(t), q^c(t)) \sim D$. This provides a computationally tractable approximation of the uncertainty set. In our experiments on linear system dynamics, we found that $\hat{X}_t$ selected using this approximation was always in the consistent set defined by the full trajectory $D$, when allowing for small numerical inaccuracies introduced by the CVXPY optimization solver.

Unless otherwise stated, we initialize $\hat\eta_1 = 0$. We initialize $\hat{X}_1$ by adding noise to the true $X^\star$ in two ways. First, we scale each line impedance $x_{ij}$ by a random factor $\sigma_{ij} \iid \text{Uniform}[0, 2]$. Second, we randomly permute the bus ordering, so $\hat{X}_1$ corresponds to a permuted grid topology. Finally, we project $\hat{X}_1$ into the uncertainty set $\X_\alpha$, with $\alpha=1$.

Except for the experiments shown in \Cref{fig:delta_effect}, we fix $\delta=20$ which empirically strikes a balance between minimizing the modeling error $\|\hat{X}_t - X^\star\|_\triangle$ and overfitting noise.

\begin{figure*}[!tbp]
\centering
\begin{minipage}[b]{0.04\textwidth}
\includegraphics[width=\textwidth]{figures/buses_legend.pdf}\vspace{3mm}
\end{minipage}
\subfloat[]{%
    \label{fig:linear_unknown}%
    \includegraphics[width=0.19\textwidth]{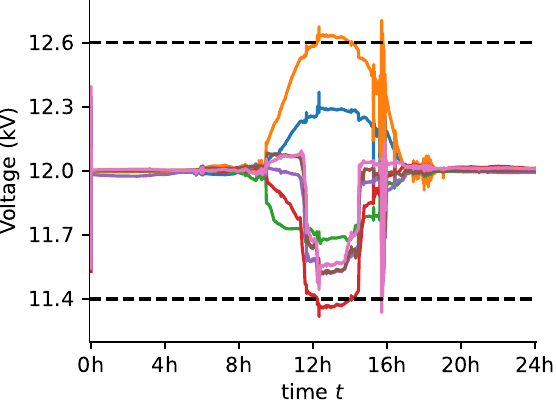}%
}%
\subfloat[]{%
    \label{fig:linear_topo-14}%
    \includegraphics[width=0.19\textwidth]{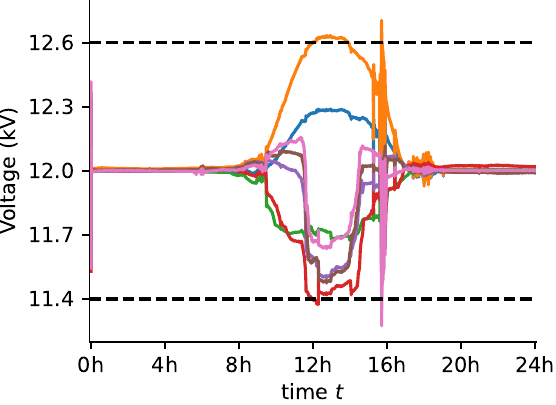}%
}%
\subfloat[]{%
    \label{fig:linear_lines-14}%
    \includegraphics[width=0.19\textwidth]{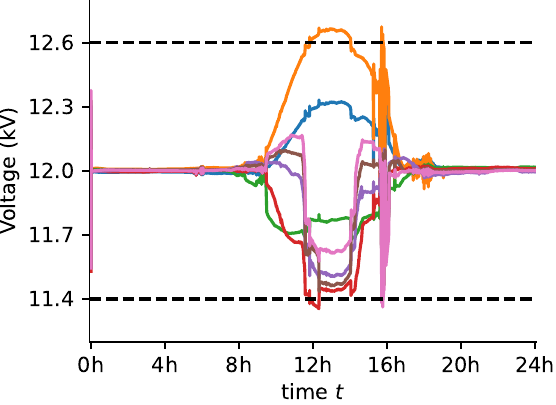}%
}%
\subfloat[]{%
    \label{fig:linear_known}%
    \includegraphics[width=0.19\textwidth]{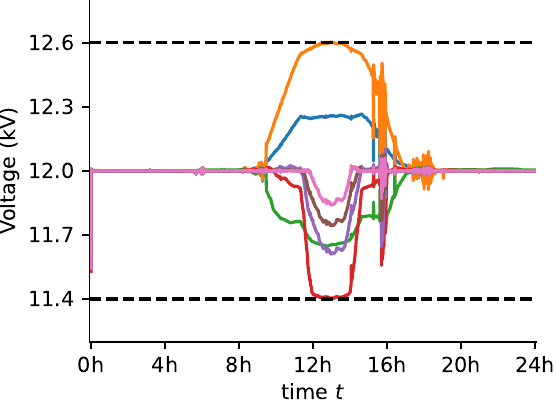}%
}%
\subfloat[]{%
    \label{fig:linear_error}%
    \includegraphics[width=0.19\textwidth]{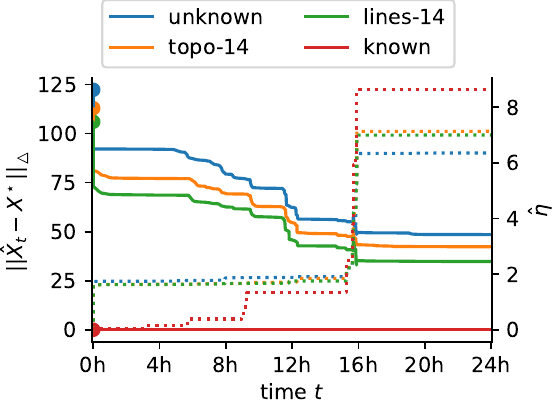}%
}
\caption{\textbf{(a)-(d)} Voltage profiles of 7 different buses simulated under linear system dynamics \eqref{eq:simplified-distflow}. Dotted black lines indicate voltage limits $\vlims$.
\textbf{(a)} \CTRL+\SEL{} initialized with random $\hat{X} \in \X_\alpha$. \textbf{(b)} like (a) but the topology for buses 1-14 is known. \textbf{(c)} like (a) but the topology and line parameters for buses 1-14 are known. \textbf{(d)} like (a) but $\hat{X}=X^\star$ is fixed and known so only $\hat\eta$ is learned \textbf{(e)} Convergence of $\hat{X}_t$ towards true $X^\star$ (solid lines, left axis) and estimated $\hat\eta$ (dotted lines, right axis). Notice that even when $\|\hat{X}_t - X^\star\|_\triangle$ does not reach 0, the controller still performs quite well.}
\label{fig:linear}
\end{figure*}

\begin{figure*}[!tbp]
\centering
\begin{minipage}[b]{0.04\textwidth}
\includegraphics[width=\textwidth]{figures/buses_legend.pdf}\vspace{3mm}
\end{minipage}
\subfloat[]{%
    \label{fig:nonlinear_unknown}%
    \includegraphics[width=0.19\textwidth]{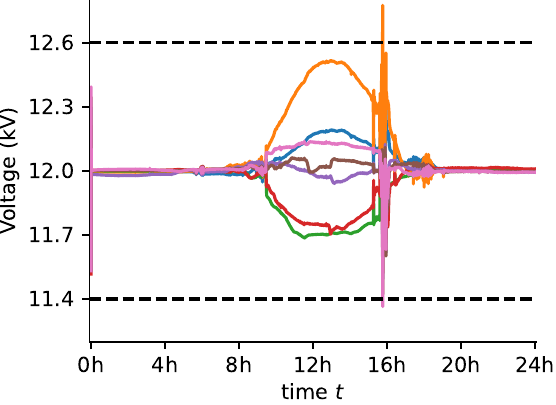}%
}%
\subfloat[]{%
    \label{fig:nonlinear_topo-14}%
    \includegraphics[width=0.19\textwidth]{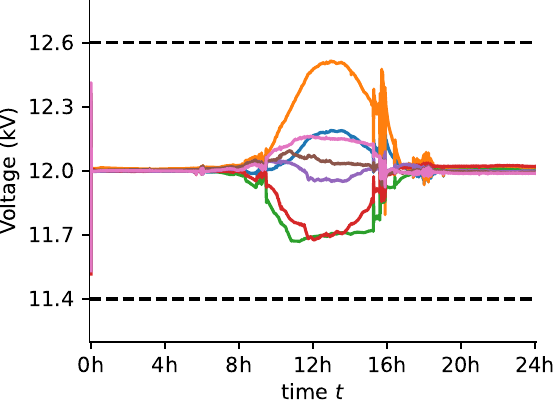}%
}%
\subfloat[]{%
    \label{fig:nonlinear_lines-14}%
    \includegraphics[width=0.19\textwidth]{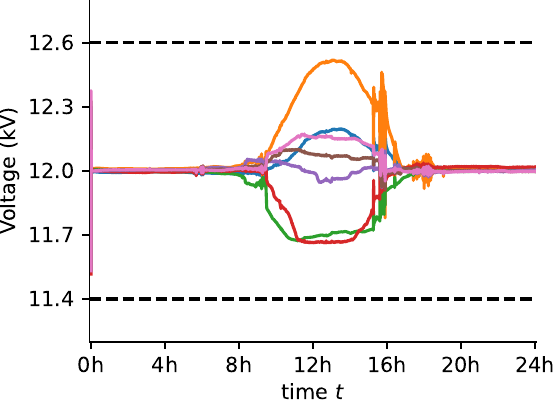}%
}%
\subfloat[]{%
    \label{fig:nonlinear_known}%
    \includegraphics[width=0.19\textwidth]{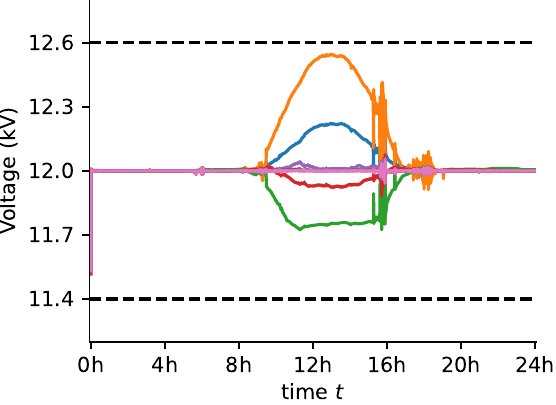}%
}%
\subfloat[]{%
    \label{fig:nonlinear_error}%
    \includegraphics[width=0.19\textwidth]{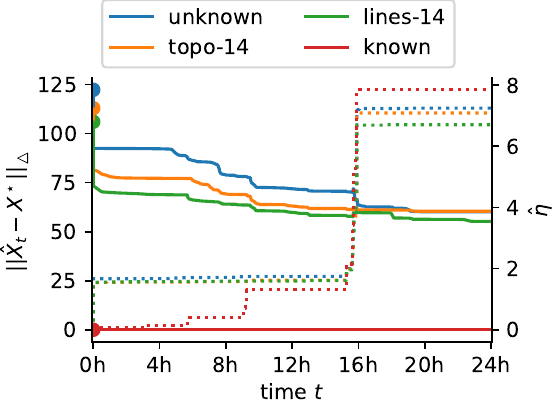}%
}
\caption{Parallels \Cref{fig:linear}. Voltage profiles of 7 different buses simulated under balanced nonlinear AC power flow \eqref{eq:nonlinear-distflow}.}
\label{fig:nonlinear}
\end{figure*}

\begin{table}[t]
\centering
\caption{Performance of our method simulated under linear system dynamics (top) and nonlinear system dynamics (bottom). See \Cref{sec:experimental-results}.}
\label{tab:quantitative-results}
\begin{tabular}{c r r r}
\toprule
     Info provided
              & \# mistakes          & avg. violation    & max violation \\
\midrule
     Unknown  &  662.2 $\pm$ 435.1 & 0.43 $\pm$ 0.16 & 4.40 $\pm$ 2.59 \\
     Topo-14  &  917.0 $\pm$ 155.2 & 0.34 $\pm$ 0.12 & 4.93 $\pm$ 2.19 \\
     Lines-14 & 1085.8 $\pm$ 186.6 & 0.57 $\pm$ 0.29 & 2.55 $\pm$ 1.09 \\
     Known    &   88.0             & 0.07            & 0.12 \\
\midrule
     Unknown  &   16.0 $\pm$  15.8 & 0.68 $\pm$ 0.56 & 2.74 $\pm$ 2.39 \\
     Topo-14  &    0.5 $\pm$   0.6 & 2.21 $\pm$ 2.56 & 2.90 $\pm$ 3.38 \\
     Lines-14 &    0.5 $\pm$   0.6 & 1.01 $\pm$ 1.20 & 1.45 $\pm$ 1.73 \\
     Known    &    0.0             & 0.00            & 0.00 \\
\bottomrule
\end{tabular}
\end{table}

\subsection{Experimental Results}
\label{sec:experimental-results}

Our experimental results demonstrate the ability of \Cref{alg:robust_online_volt_control} to stabilize the system without knowledge of the network topology, providing good voltage control performance even though it still has significant uncertainty about the topology at the end of the experiments. We test our algorithm under both the linearized system dynamics \eqref{eq:vol_dyn1} as well as the more realistic nonlinear balanced AC power flow setting \eqref{eq:nonlinear-distflow} simulated using Pandapower \cite{pandapower2018}. The convex optimization problems for \SEL{} and \CTRL{} are solved with CVXPY \cite{diamond2016cvxpy} using the MOSEK solver \cite{mosek}. Code for our simulations are available on GitHub.\footnote{\url{https://github.com/chrisyeh96/voltctrl}}

\begin{figure}
\centering
\begin{minipage}[b]{0.13\columnwidth}
\includegraphics[width=\textwidth]{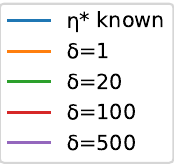}\vspace{9mm}
\end{minipage}
\subfloat[]{%
    \label{fig:delta_effect_linear}%
    \includegraphics[width=0.41\columnwidth]{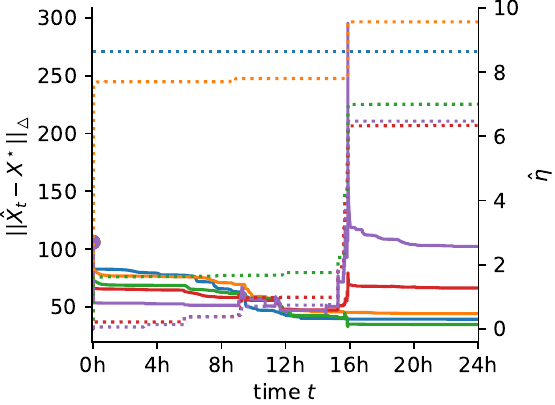}%
}%
\subfloat[]{%
    \label{fig:delta_effect_nonlinear}%
    \includegraphics[width=0.41\columnwidth]{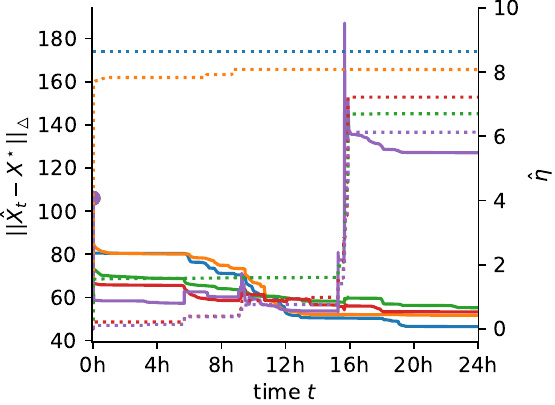}%
}
\caption{Effect of varying $\delta$ on consistent model chasing. As in \Cref{fig:linear_error}, convergence of $\hat{X}_t$ towards $X^\star$ is plotted in solid lines (left axis), and estimated $\hat\eta$ is plotted in dotted lines (right axis). In blue are results where we fix $\hat\eta = \eta^\star = 8.65$ and $\delta$ has no effect. \textbf{(a)} linear dynamics \textbf{(b)} nonlinear dynamics.}
\label{fig:delta_effect}
\end{figure}

\begin{figure}[tbp]
\centering
\begin{minipage}[b]{0.14\columnwidth}
    \includegraphics[width=\textwidth]{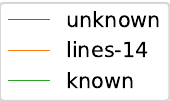}\vspace{10mm}
\end{minipage}
\subfloat[]{%
    \label{fig:partial1}%
    \includegraphics[width=0.42\columnwidth]{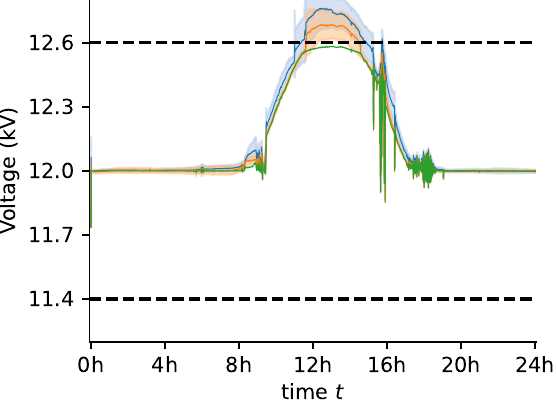}%
}%
\subfloat[]{%
    \label{fig:partial2}%
    \includegraphics[width=0.42\columnwidth]{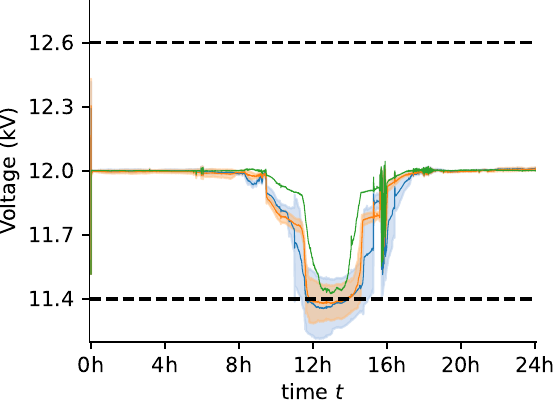}%
}
\caption{Balanced nonlinear AC power flow simulation of the voltage profiles under different algorithms with \textit{partial control and observation}. The dark colors plot the mean voltages across 4 random initializations of $\hat{X}_1$ and the light shading plots $\pm 1$ standard deviation. \textbf{(a)} bus 18 \textbf{(b)} bus 30.}
\label{fig:nonlinear_partial}
\end{figure}

\paragraph{Linearized power flow with full control}
Our first set of experiments, shown in \Cref{fig:linear} and \Cref{tab:quantitative-results} (top), tests our algorithm's performance on the SCE-56 bus network under linearized system dynamics \eqref{eq:vol_dyn1}. Different amounts of network information are provided to the consistent model chasing algorithm \SEL{} via the initial consistent set $\X_\alpha$, ranging from no information (``unknown,'' \Cref{fig:linear_unknown}), information about the edges among the first 14 buses but not the line impedances (``topo-14,'' \Cref{fig:linear_topo-14}), information about the edges \textit{and} line impedances among the first 14 buses (``lines-14,'' \Cref{fig:linear_lines-14}), and complete information about the network (``known,'' \Cref{fig:linear_known}). Because the buses in the SCE 56-bus network are numbered in a topological ordering, the ``topo-14'' setting adds constraints of the form \eqref{eq:known-topology-constraint} for all of the first 14 buses, and the ``lines-14'' setting constrains all $X \in \X_\alpha$ such that $X_{ij} = X^\star_{ij}$ for all $i,j \in \{1, \dotsc, 14\}$.

As shown in \Cref{fig:linear_error}, incorporating more prior knowledge about the network into the initial uncertainty set reduces the model estimation error $\|\hat{X} - X^\star\|_\triangle$. Furthermore, the model estimation error decreases the most dramatically when the voltage violations are the largest. However, we note that lower model estimation error does not always result in fewer mistakes in our experiments.

\Cref{tab:quantitative-results} quantifies our algorithm's performance under varying amounts of initial network information. A ``mistake'' refers to any time step where any bus' voltage violated the limits $\vlims$. ``Avg. violation'' refers to the average absolute squared-voltage violation
\[
    \mean_{i \in [n],\, t \in [T]:\ v_i(t) \not\in [\vmin_i, \vmax_i]} \max(v_i(t) - \vmax_i,\, \vmin_i - v_i(t)).
\]
``Max violation'' is like ``avg. violation'' but replaces the $\mean$ with a $\max$. Results given show the mean and standard deviation over 4 random initializations of $\hat{X}_1$.

\paragraph{Nonlinear power flow with full control}
Our second set of experiments test our online controller on the standard balanced AC power flow model \eqref{eq:nonlinear-distflow}. As in the linearized power flow experiments, we compare \Cref{alg:robust_online_volt_control}'s performance across varying levels of prior information (\Cref{fig:nonlinear} and \Cref{tab:quantitative-results}, bottom). Even though the controller is designed under the assumption of linearized voltage dynamics, our algorithm still performs well in the nonlinear simulation. The performance improves progressively, with less voltage violation and smaller overall deviation from the desired steady state voltage as it is provided more information.

\paragraph{Nonlinear power flow with partial observation and partial control}
We also test our proposed online controller in the partial observation and partial control setting. In \Cref{fig:nonlinear_partial}, we withhold voltage observations and control authority from buses $i \in \{8, 18, 21, 30, 39, 45, 54\}$ by setting $q^c_i(t)=0$ for all $t$. We simulate the voltage profiles across 4 random initializations of $\hat{X}_1$ and plot the mean and $\pm 1$ standard deviation. Despite the more challenging setting, the performance of \Cref{alg:robust_online_volt_control} remains strong. We again observe in \Cref{fig:nonlinear_partial} that adding prior topology and line parameter information marginally improves the performance of \Cref{alg:robust_online_volt_control}.

\paragraph{Varying $\delta$}
In \Cref{fig:delta_effect}, we demonstrate the effect of varying $\delta$ on the performance of our algorithm. From a theoretical perspective, \Cref{thm:main} shows that our algorithm achieves a finite mistake bound for every $\delta > 0$, and this bound is minimized by taking $\delta$ to be very large. What happens when using a large $\delta$, though, is that the model chasing algorithm may overfit to noise until a time when the noise is too large, forcing the algorithm to increase the noise bound (\textit{e.g.}, around the 16h mark in \Cref{fig:delta_effect}). This leads to inconsistent performance in the short term, albeit with perhaps better worst-case performance. In contrast, a smaller $\delta$ allows more of the network uncertainty to be captured in a larger noise $\hat\eta$ term at the cost of learning a less accurate $\hat{X}$, but the decrease in modeling error $\|\hat{X}_t - X^\star\|_\triangle$ becomes monotonic.

In practice, $\delta$ should be treated as a prior ``confidence'' about how close the initial guess of $\hat\eta$ is to $\eta^\star$. $\delta$ should be larger when there is greater confidence that $\hat\eta$ is close to the true $\eta^\star$.

\paragraph{Detecting topology changes}
Finally, we consider the challenge of responding to a change in the distribution grid topology in real-time. If the topology changes from one radial grid to another due to switches, new observed data may render the consistent set empty. That is, when consistent model chasing \eqref{eq:ncbc-proj} becomes infeasible, we are assured that the topology has changed. At this point, we may reset the algorithm by discarding the observed trajectory $D_t$ and reinitializing consistent parameter estimates from the original consistent set $P_1$. \Cref{fig:detect-topo-change} demonstrates this on linear system dynamics, where we introduce a topology change at the 12h mark. We replace lines $33 \to 40$ and $46 \to 48$ with new lines $1 \to 40$ and $10 \to 48$, which maintains a radial distribution grid.

\begin{figure}
    \centering
    \raisebox{-0.5\height}{\includegraphics[width=0.15\columnwidth]{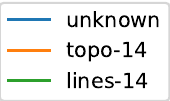}}
    \hspace{0.2cm}
    \raisebox{-0.5\height}{\includegraphics[width=0.5\columnwidth]{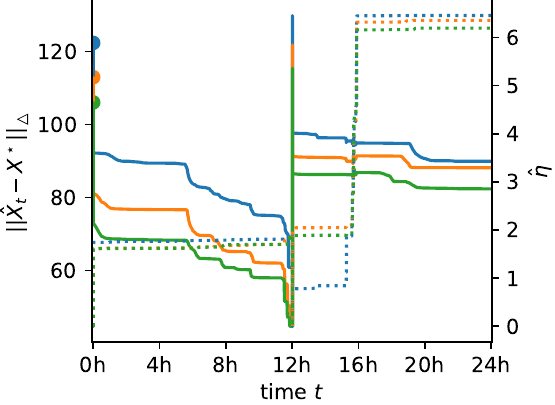}}
    \caption{Demonstration of the detection of a topology change under linear system dynamics. Convergence of $\hat{X}_t$ towards $X^\star$ is plotted in solid lines (left axis), where $X^\star$ changes at the 12h mark. The topology change triggers a reset of the consistent model chasing algorithm. Estimated $\hat\eta$ is plotted in dotted lines (right axis).}
    \label{fig:detect-topo-change}
\end{figure}
\section{Conclusion}
\label{sec:conclusion}

This paper provides the first controller that establishes a finite-mistake guarantee for voltage control in a setting with uncertainty in both the grid topology and load and generation variations. We showed that our proposed algorithm is able to learn a model of the grid dynamics in an online fashion and provably (under linearized voltage dynamics) converge to a stable controller. Further, simulated experiments on a 56-bus distribution grid demonstrate the effectiveness of our algorithm even under more realistic nonlinear dynamics. We demonstrated how to incorporate prior knowledge about the network topology and line parameters to improve performance, while also extending our algorithm to the partial observability and partial controllability setting which may better reflect real-world scenarios.

As the current algorithm is centralized, future works may consider decentralized approaches to topology-robust voltage control in order to enable faster real-time control with ideas from \cite{yu2023online}. Another direction is to extend the current algorithm to the time-varying topology setting with techniques from works such as \cite{yu2023ltv}. Further studies may also explore loosening the radial topology assumption and test our algorithm on unbalanced 3-phase AC grids to accommodate a wider range of distribution grids. This would be a challenging, but important, extension. Finally, an interesting algorithmic extension is to consider computationally efficient convex body chasing algorithms with better competitive ratios. Existing methods based on Steiner point \cite{argue2019nearly,bubeck2020chasing} achieve nearly-optimal competitive ratio but are computationally inefficient in high dimension settings such as voltage control, so designing efficient approximate Steiner point algorithms could potentially lead to significant performance improvements.

\section*{Acknowledgment}
We thank Dimitar Ho for helpful discussions about his framework.

\begin{spacing}{0.95}
\bibliographystyle{IEEEtran}
\bibliography{IEEEabrv,reference}

\begin{thebibliography}{10}
\providecommand{\url}[1]{#1}
\csname url@samestyle\endcsname
\providecommand{\newblock}{\relax}
\providecommand{\bibinfo}[2]{#2}
\providecommand{\BIBentrySTDinterwordspacing}{\spaceskip=0pt\relax}
\providecommand{\BIBentryALTinterwordstretchfactor}{4}
\providecommand{\BIBentryALTinterwordspacing}{\spaceskip=\fontdimen2\font plus
\BIBentryALTinterwordstretchfactor\fontdimen3\font minus \fontdimen4\font\relax}
\providecommand{\BIBforeignlanguage}[2]{{%
\expandafter\ifx\csname l@#1\endcsname\relax
\typeout{** WARNING: IEEEtran.bst: No hyphenation pattern has been}%
\typeout{** loaded for the language `#1'. Using the pattern for}%
\typeout{** the default language instead.}%
\else
\language=\csname l@#1\endcsname
\fi
#2}}
\providecommand{\BIBdecl}{\relax}
\BIBdecl

\bibitem{regulator}
\BIBentryALTinterwordspacing
``Global {Survey} of {Regulatory} {Approaches} for {Power} {Quality} and {Reliability},'' EPRI, Palo Alto, CA, USA, Tech. Rep. 1008589, Mar. 2005. [Online]. Available: \url{https://www.epri.com/research/products/1008589}
\BIBentrySTDinterwordspacing

\bibitem{haes2019survey}
H.~Haes~Alhelou, M.~E. Hamedani-Golshan, T.~C. Njenda, and P.~Siano, ``A survey on power system blackout and cascading events: Research motivations and challenges,'' \emph{Energies}, vol.~12, no.~4, p. 682, 2019.

\bibitem{rebours2007survey}
Y.~G. Rebours, D.~S. Kirschen, M.~Trotignon, and S.~Rossignol, ``A survey of frequency and voltage control ancillary services—{P}art {I}: Technical features,'' \emph{IEEE Trans. Power Syst.}, vol.~22, no.~1, pp. 350--357, Feb. 2007.

\bibitem{zhu2015fast}
H.~Zhu and H.~J. Liu, ``Fast local voltage control under limited reactive power: Optimality and stability analysis,'' \emph{IEEE Trans. Power Syst.}, vol.~31, no.~5, pp. 3794--3803, Sep. 2016.

\bibitem{molzahn2017survey}
D.~K. Molzahn \emph{et~al.}, ``A survey of distributed optimization and control algorithms for electric power systems,'' \emph{IEEE Trans. Smart Grid}, vol.~8, no.~6, pp. 2941--2962, 2017.

\bibitem{senjyu2008optimal}
T.~Senjyu, Y.~Miyazato, A.~Yona, N.~Urasaki, and T.~Funabashi, ``Optimal distribution voltage control and coordination with distributed generation,'' \emph{IEEE Trans. Power Del.}, vol.~23, no.~2, pp. 1236--1242, Apr. 2008.

\bibitem{gao2010review}
C.~Gao and M.~A. Redfern, ``A review of voltage control techniques of networks with distributed generations using on-load tap changer transformers,'' in \emph{Proc. 45th Int. Univ. Power Eng. Conf. (UPEC)}, 2010, pp. 1--6.

\bibitem{guo2019mpc}
Y.~Guo, Q.~Wu, H.~Gao, X.~Chen, J.~Østergaard, and H.~Xin, ``{MPC}-based coordinated voltage regulation for distribution networks with distributed generation and energy storage system,'' \emph{IEEE Trans. Sustain. Energy}, vol.~10, no.~4, pp. 1731--1739, Oct. 2019.

\bibitem{maharjan2021robust}
S.~Maharjan, A.~M. Khambadkone, and J.~C.~H. Peng, ``Robust constrained model predictive voltage control in active distribution networks,'' \emph{IEEE Trans. Sustain. Energy}, vol.~12, no.~1, pp. 400--411, Jan. 2021.

\bibitem{park2018exact}
S.~Park, D.~Deka, and M.~Chcrtkov, ``Exact topology and parameter estimation in distribution grids with minimal observability,'' in \emph{{Proc. 2018 Power Syst. Comput. Conf.}}, 2018, pp. 1--6.

\bibitem{li2019robust}
H.~Li, Y.~Weng, Y.~Liao, B.~Keel, and K.~E. Brown, ``Robust hidden topology identification in distribution systems,'' \emph{arXiv:1902.01365}, 2019.

\bibitem{dehghanian2016probabilistic}
P.~Dehghanian and M.~Kezunovic, ``Probabilistic impact of transmission line switching on power system operating states,'' in \emph{Proc. IEEE/PES Transmiss. Distrib. Conf. Expo.}, 2016, pp. 1--5.

\bibitem{li2016determination}
Z.~Li, S.~Jazebi, and F.~De~Leon, ``Determination of the optimal switching frequency for distribution system reconfiguration,'' \emph{IEEE Trans. Power Del.}, vol.~32, no.~4, pp. 2060--2069, Aug. 2017.

\bibitem{deka2017structure}
D.~Deka, S.~Backhaus, and M.~Chertkov, ``Structure learning in power distribution networks,'' \emph{IEEE Trans. Control Netw. Syst.}, vol.~5, no.~3, pp. 1061--1074, Sep. 2018.

\bibitem{esmaeili2019optimal}
S.~Esmaeili, A.~Anvari-Moghaddam, S.~Jadid, and J.~M. Guerrero, ``Optimal simultaneous day-ahead scheduling and hourly reconfiguration of distribution systems considering responsive loads,'' \emph{Int. J. Elect. Power Energy Syst.}, vol. 104, pp. 537--548, 2019.

\bibitem{liao2015distribution}
Y.~Liao, Y.~Weng, M.~Wu, and R.~Rajagopal, ``Distribution grid topology reconstruction: {An} information theoretic approach,'' in \emph{Proc. North Am. Power Symp. (NAPS)}, Oct. 2015, pp. 1--6.

\bibitem{bolognani2013distributed}
S.~Bolognani, R.~Carli, G.~Cavraro, and S.~Zampieri, ``Distributed reactive power feedback control for voltage regulation and loss minimization,'' \emph{IEEE Trans. Autom. Control}, vol.~60, no.~4, pp. 966--981, Apr. 2015.

\bibitem{li2014realtime}
N.~Li, G.~Qu, and M.~Dahleh, ``Real-time decentralized voltage control in distribution networks,'' in \emph{Proc. 52nd Annu. Allerton Conf. Commun. Control Comput.}, Sep. 2014, pp. 582--588.

\bibitem{tang2019fast}
Z.~Tang, D.~J. Hill, and T.~Liu, ``Fast distributed reactive power control for voltage regulation in distribution networks,'' \emph{IEEE Trans. Power Syst.}, vol.~34, no.~1, pp. 802--805, Jan. 2019.

\bibitem{qu2020optimal}
G.~Qu and N.~Li, ``Optimal distributed feedback voltage control under limited reactive power,'' \emph{IEEE Trans. Power Syst.}, vol.~35, no.~1, pp. 315--331, Jan. 2020.

\bibitem{duan2019deep}
J.~Duan \emph{et~al.}, ``Deep-reinforcement-learning-based autonomous voltage control for power grid operations,'' \emph{IEEE Trans. Power Syst.}, vol.~35, no.~1, pp. 814--817, Jan. 2020.

\bibitem{wang2020data}
S.~Wang \emph{et~al.}, ``A data-driven multi-agent autonomous voltage control framework using deep reinforcement learning,'' \emph{IEEE Trans. Power Syst.}, vol.~35, no.~6, pp. 4644--4654, Nov. 2020.

\bibitem{xu2020optimal}
H.~Xu, A.~D. Domínguez-García, and P.~W. Sauer, ``Optimal tap setting of voltage regulation transformers using batch reinforcement learning,'' \emph{IEEE Trans. Power Syst.}, vol.~35, no.~3, pp. 1990--2001, May 2020.

\bibitem{gao2021consensus}
Y.~Gao, W.~Wang, and N.~Yu, ``Consensus multi-agent reinforcement learning for volt-var control in power distribution networks,'' \emph{IEEE Trans. Smart Grid}, vol.~12, no.~4, pp. 3594--3604, Jul. 2021.

\bibitem{sun2021two}
X.~Sun and J.~Qiu, ``Two-stage volt/{V}ar control in active distribution networks with multi-agent deep reinforcement learning method,'' \emph{IEEE Trans. Smart Grid}, vol.~12, no.~4, pp. 2903--2912, Jul. 2021.

\bibitem{shi2022stability}
Y.~Shi, G.~Qu, S.~Low, A.~Anandkumar, and A.~Wierman, ``Stability constrained reinforcement learning for real-time voltage control,'' in \emph{Proc. Am. Control Conf. (ACC)}, 2022, pp. 2715--2721.

\bibitem{feng2022stability}
J.~Feng, Y.~Shi, G.~Qu, S.~H. Low, A.~Anandkumar, and A.~Wierman, ``Stability constrained reinforcement learning for real-time voltage control in distribution systems,'' \emph{arXiv:2209.07669}, 2022.

\bibitem{cui2022decentralized}
W.~Cui, J.~Li, and B.~Zhang, ``Decentralized safe reinforcement learning for inverter-based voltage control,'' \emph{Electr. Power Syst. Res.}, vol. 211, 2022, {A}rt no. 108609.

\bibitem{kekatos2014grid}
V.~Kekatos, G.~B. Giannakis, and R.~Baldick, ``Grid topology identification using electricity prices,'' in \emph{Prof. IEEE Power Eng. Soc. Gen. Meeting}, 2014, pp. 1--5.

\bibitem{fabbiani2021identification}
E.~Fabbiani, P.~Nahata, G.~De~Nicolao, and G.~Ferrari-Trecate, ``Identification of {AC} distribution networks with recursive least squares and optimal design of experiment,'' \emph{IEEE Trans. Control Syst. Technol.}, vol.~30, no.~4, pp. 1750--1757, 2021.

\bibitem{brouillon2022bayesian}
J.-S. Brouillon, E.~Fabbiani, P.~Nahata, K.~Moffat, F.~D{\"o}rfler, and G.~Ferrari-Trecate, ``Bayesian error-in-variables models for the identification of distribution grids,'' \emph{IEEE Trans. Smart Grid}, vol.~14, no.~2, pp. 1289--1299, Mar. 2023.

\bibitem{cavraro2018graph}
G.~Cavraro and V.~Kekatos, ``Graph algorithms for topology identification using power grid probing,'' \emph{IEEE Control Syst. Lett.}, vol.~2, no.~4, pp. 689--694, 2018.

\bibitem{ardakanian2019identification}
O.~Ardakanian \emph{et~al.}, ``On identification of distribution grids,'' \emph{IEEE Trans. on Control Netw. Syst.}, vol.~6, no.~3, pp. 950--960, Sep. 2019.

\bibitem{xu2019data}
H.~Xu, A.~D. Dom{\'\i}nguez-Garc{\'\i}a, V.~V. Veeravalli, and P.~W. Sauer, ``Data-driven voltage regulation in radial power distribution systems,'' \emph{IEEE Trans. Power Syst.}, vol.~35, no.~3, pp. 2133--2143, May 2020.

\bibitem{nowak2020measurement}
S.~Nowak, Y.~C. Chen, and L.~Wang, ``Measurement-based optimal {DER} dispatch with a recursively estimated sensitivity model,'' \emph{IEEE Trans. Power Systems}, vol.~35, no.~6, pp. 4792--4802, Nov. 2020.

\bibitem{deka2023learning}
D.~Deka, V.~Kekatos, and G.~Cavraro, ``Learning distribution grid topologies: A tutorial,'' \emph{IEEE Trans. Smart Grid}, vol.~15, no.~1, pp. 999--1013, Jan. 2024.

\bibitem{chen2020data}
Y.~Chen, Y.~Shi, and B.~Zhang, ``Data-driven optimal voltage regulation using input convex neural networks,'' \emph{Electr. Power Syst. Res.}, vol. 189, Dec. 2020, {A}rt. no. 106741.

\bibitem{brouillon2022robust}
J.-S. Brouillon, K.~Moffat, F.~D{\"o}rfler, and G.~Ferrari-Trecate, ``Robust online joint state/input/parameter estimation of linear systems,'' in \emph{Prof. IEEE 61st Conf. Decis. Control (CDC)}, 2022, pp. 2153--2158.

\bibitem{yu2017patopa}
J.~Yu, Y.~Weng, and R.~Rajagopal, ``{PaToPa}: A data-driven parameter and topology joint estimation framework in distribution grids,'' \emph{IEEE Trans. Power Syst.}, vol.~33, no.~4, pp. 4335--4347, Jul. 2018.

\bibitem{deka2016estimating}
D.~Deka, S.~Backhaus, and M.~Chertkov, ``Estimating distribution grid topologies: {A} graphical learning based approach,'' in \emph{Prof. Power Syst. Comput. Conf.}, 2016, pp. 1--7.

\bibitem{sharon2012topology}
Y.~{Sharon}, A.~M. {Annaswamy}, A.~L. {Motto}, and A.~{Chakraborty}, ``Topology identification in distribution network with limited measurements,'' in \emph{Prof. IEEE PES Innov. Smart Grid Technol. (ISGT)}, 2012, pp. 1--6.

\bibitem{argue2019nearly}
C.~Argue, S.~Bubeck, M.~B. Cohen, A.~Gupta, and Y.~T. Lee, ``A nearly-linear bound for chasing nested convex bodies,'' in \emph{Proc. 30th Annu. ACM-SIAM Symp. Discrete Algorithms}, 2019, pp. 117--122.

\bibitem{ho2021online}
D.~Ho, H.~Le, J.~Doyle, and Y.~Yue, ``Online robust control of nonlinear systems with large uncertainty,'' in \emph{Proc. 24th Int. Conf. Artif. Intell. Stat. (AISTATS)}, vol. 130.\hskip 1em plus 0.5em minus 0.4em\relax PMLR, 2021, pp. 3475--3483.

\bibitem{yeh2022robust}
C.~Yeh, J.~Yu, Y.~Shi, and A.~Wierman, ``Robust online voltage control with an unknown grid topology,'' in \emph{Proc. 13th ACM Int. Conf. Future Energy Syst.}, 2022, pp. 240--250.

\bibitem{farivar2012optimal}
M.~Farivar, R.~Neal, C.~Clarke, and S.~Low, ``{Optimal inverter VAR control in distribution systems with high PV penetration},'' in \emph{IEEE Power Eng. Soc. Gen. Meeting}, 2012, pp. 1--7.

\bibitem{baran1989optimal}
M.~E. Baran and F.~F. Wu, ``Optimal capacitor placement on radial distribution systems,'' \emph{IEEE Trans. Power Del.}, vol.~4, no.~1, pp. 725--734, 1989.

\bibitem{baran1989optimalsizing}
M.~E. Baran and F.~F. Wu, ``Optimal sizing of capacitors placed on a radial distribution system,'' \emph{IEEE Trans. Power Del.}, vol.~4, no.~1, pp. 735--743, 1989.

\bibitem{farivar2013equilibrium}
M.~Farivar, L.~Chen, and S.~Low, ``Equilibrium and dynamics of local voltage control in distribution systems,'' in \emph{Proc. IEEE Conf. Decis. Control}, 2013, pp. 4329--4334.

\bibitem{bubeck2020chasing}
S.~Bubeck, B.~Klartag, Y.~T. Lee, Y.~Li, and M.~Sellke, ``Chasing nested convex bodies nearly optimally,'' in \emph{Proc. 31st Annu. ACM-SIAM Symposium on Discrete Algorithms}, 2020, pp. 1496--1508.

\bibitem{bernstein2019real}
A.~Bernstein and E.~Dall’Anese, ``Real-time feedback-based optimization of distribution grids: A unified approach,'' \emph{IEEE Trans. on Control of Netw. Syst.}, vol.~6, no.~3, pp. 1197--1209, Sep. 2019.

\bibitem{pandapower2018}
L.~Thurner \emph{et~al.}, ``Pandapower---{A}n open-source python tool for convenient modeling, analysis, and optimization of electric power systems,'' \emph{IEEE Trans. Power Syst.}, vol.~33, no.~6, pp. 6510--6521, Nov. 2018.

\bibitem{diamond2016cvxpy}
S.~Diamond and S.~Boyd, ``{CVXPY}: {A} python-embedded modeling language for convex optimization,'' \emph{J. Mach. Learn. Res.}, vol.~17, no.~83, p. 2909–2913, 2016.

\bibitem{mosek}
\BIBentryALTinterwordspacing
``{MOSEK} optimizer {API} for python 10.0.46,'' 2023. [Online]. Available: \url{https://docs.mosek.com/10.0/pythonapi/index.html}
\BIBentrySTDinterwordspacing

\bibitem{yu2023online}
J.~Yu, D.~Ho, and A.~Wierman, ``Online adversarial stabilization of unknown networked systems,'' in \emph{Prof. ACM Meas. Anal. Comput. Syst.}, vol.~7, no.~1, 2023, pp. 1--43.

\bibitem{yu2023ltv}
J.~Yu, V.~Gupta, and A.~Wierman, ``Online adversarial stabilization of unknown linear time-varying systems,'' in \emph{Prof. 62nd IEEE Conf. Decis. Control (CDC)}, 2023, pp. 8320--8327.

\end{thebibliography}
\end{spacing}

\end{document}